\newcolumntype{M}[1]{>{\centering\arraybackslash}m{#1}}
\newcolumntype{N}{@{}m{0pt}@{}}
\newcommand\Tstrut{\rule{0pt}{5ex}}       
\newcommand\Bstrut{\rule[-0.9ex]{0pt}{0pt}} 
\newcommand{\TBstrut}{\Tstrut\Bstrut} 
\tikzstyle{block} = [rectangle, draw, 
\tikzstyle{rec} = [rectangle, draw]
\tikzstyle{line} = [draw, -latex]
\newtheorem{lemm}{Lemma}[section]
\newtheorem{theo}{Theorem}[section]
\newtheorem{prop}{Proposition}[section]
\theoremstyle{definition}
\newtheorem{remark}{Remark}[section]
\DeclareMathOperator{\Tr}{\mathrm{Tr}}
\DeclareMathOperator{\cA}{{\mathcal{A}}}
\DeclareMathOperator{\cB}{{\mathcal{B}}}
\DeclarePairedDelimiterX\bkk[2]{\langle}{\rangle}{#1 \delimsize\vert #2}
\DeclarePairedDelimiterX\bk[2]{\langle}{\rangle}{#1 \delimsize\vert #1}
\DeclarePairedDelimiterX\kbb[2]{\vert}{\vert}{#1 \rangle\langle #2}
\DeclarePairedDelimiterX\kb[1]{\vert}{\vert}{#1 \rangle\langle #1}
\DeclarePairedDelimiter\paren{(}{)}
\newcommand\vN{\mathsf{n}}
\newcommand\vF{\mathsf{f}}
\newcommand\vM{\mathsf{m}}
\newcommand\vR{\mathsf{r}}
\newcommand\vkc{\mathsf{k}}
\begin{document}

\title{Capacity of Quantum Private Information Retrieval with Multiple Servers} 

\author{%
Seunghoan Song,~\IEEEmembership{Student Member,~IEEE}, 
 and~Masahito~Hayashi,~\IEEEmembership{Fellow,~IEEE}
\thanks{This article was presented in part at Proceedings of 
2019 IEEE International Symposium on Information Theory \cite{SH19}.}
\thanks{S. Song is with Graduate school of Mathematics, Nagoya University, Nagoya, 464-8602, Japan
(e-mail: m17021a@math.nagoya-u.ac.jp).}
\thanks{M. Hayashi is with 
Shenzhen Institute for Quantum Science and Engineering, Southern University of Science and Technology,
Shenzhen, 518055, China,
Guangdong Provincial Key Laboratory of Quantum Science and Engineering,
Southern University of Science and Technology, Shenzhen 518055, China,
Shenzhen Key Laboratory of Quantum Science and Engineering, Southern
University of Science and Technology, Shenzhen 518055, China,
and Graduate School of Mathematics, Nagoya University, Nagoya, 464-8602, Japan
(e-mail:hayashi@sustech.edu.cn).}
\thanks{
SS is grateful to Dr. Hsuan-Yin Lin for helpful discussions and comments.
SS is supported by Rotary Yoneyama Memorial Master Course Scholarship (YM), Lotte Foundation Scholarship, and JSPS Grant-in-Aid for JSPS Fellows No. JP20J11484. 
MH is supported in part by Guangdong Provincial Key Laboratory (Grant No. 2019B121203002),
a JSPS Grant-in-Aids for Scientific Research (A) No.17H01280 and for Scientific Research (B) No.16KT0017, and Kayamori Foundation of Information Science Advancement.
}
\thanks{Copyright (c) 2017 IEEE. Personal use of this material is permitted.  However, permission to use this material for any other purposes must be obtained from the IEEE by sending a request to pubs-permissions@ieee.org.}
}

\maketitle

\begin{abstract}

%
%

We study the capacity of quantum private information retrieval (QPIR) with multiple servers.
In the QPIR problem with multiple servers, a user retrieves a classical file by downloading quantum systems from multiple servers 	
	each of which contains the copy of a classical file set while the identity of the downloaded file is not leaked to each server.
The QPIR capacity is defined as 
the maximum rate of the file size over the whole dimension of the downloaded quantum systems.
When the servers are assumed to share prior entanglement, 
	we prove that the QPIR capacity with multiple servers is $1$
regardless of the number of servers and files.
We construct a rate-one protocol only with two servers.
This capacity-achieving protocol outperforms its classical counterpart in the sense of the capacity, server secrecy, and upload cost.
The strong converse bound is derived concisely without using any secrecy condition.
We also prove that the capacity of multi-round QPIR is $1$.
\end{abstract}

\section{Introduction}

Introduced by the seminal paper \cite{CGKS98}, 
Private Information Retrieval (PIR) finds efficient methods to download a file from non-communicating servers 
	each of which contains the copy of a classical file set while the identity of the downloaded file is not leaked to each server.
This problem is trivially solved by requesting all files to one of the servers, but this method is inefficient.
Finding an efficient method is the goal of this problem and it has been extensively studied in many papers \cite{CMS99, Lipmaa10, BS03, DGH12}.
Moreover,  the papers \cite{KdW03,KdW04, Ole11,BB15, LeG12, KLLGR16, ABCGLS19} have studied the Quantum PIR (QPIR) problem
where the user downloads quantum systems instead of classical bits to retrieve a classical file from the servers.

In classical PIR studies,
    the paper \cite{SJ16} started the discussion of the PIR capacity. 
The PIR capacity is defined by the maximum rate of the file size over the download size when the numbers of the servers and the files are fixed.
In the definition of the PIR capacity, the upload cost, i.e., the total size of the queries, is neglected since it does not scale with the file size which is allowed to go infinity.
	When each of the $\vN$ servers contains a copy of the $\vF$ files,
    the paper \cite{SJ16} showed that 
    the PIR capacity is $(1-1/\mathsf{n})/(1-(1/\vN)^{\vF})$.
Moreover, the paper \cite{TSC18} proposed a capacity-achieving protocol with the minimum upload cost and minimum file size in a class of PIR protocols.
{Furthermore, after \cite{SJ16}, several PIR capacities have been studied under different problem settings.
Symmetric PIR is the PIR with server secrecy that the user obtains no more information than the target file,
and the capacity of the symmetric PIR is $1-{\vN}^{-1}$ \cite{SJ16-2}.
Another extension is
the PIR with coded databases \cite{CHY15, FHGHK17, BU18,KLRG17,LKRG18,HFH19}, where the files are coded and distributed to the servers. 
When the files are coded by an $(\vN,\vkc)$ Maximum Distance Separable (MDS) code, 
	the PIR capacity is $(1-\vkc/\vN)/(1-\paren*{\vkc/\vN}^{\vF})$ \cite{BU18}.
Multi-round PIR has also been studied in \cite{SJ18} and the capacity was proved to be the same as the PIR capacity derived in \cite{SJ16}.
}


    On the other hand, the QPIR problem is rarely treated with multiple servers and there is no study on the capacity of the QPIR problem.
    Though the papers \cite{KdW03,KdW04} treated the QPIR problem with multiple servers,
    they evaluated the communication complexity which is the sum of upload and download costs required to retrieve a one-bit file
    instead of the capacity. 

\begin{table}[t]   \label{tab:compare}
\renewcommand{\arraystretch}{1.6}
\begin{center}
\caption[caption]{Capacities of classical and quantum PIRs}
{
\setlength\extrarowheight{0.0em}
\begin{tabular}{|M{7em}|M{9.5em}|M{9.5em}|N}
\cline{1-3}
                & Classical PIR Capacity & Quantum PIR Capacity  & \\
                \cline{1-3}
PIR             & $\displaystyle \frac{1-{\vN}^{-1}}{1-{\vN}^{-\vF}}$ \cite{SJ16}& $1$ ${}^{\ddagger}$  & \TBstrut \\[0.9em]
                \cline{1-3}
Symmetric PIR   & $\displaystyle 1-{\vN}^{-1}$ \cite{SJ16-2}& $1$ ${}^{\ddagger}$ &  \TBstrut\\[0.6em]
                \cline{1-3}
Multi-round PIR 
                & $\displaystyle \frac{1-{\vN}^{-1}}{1-{\vN}^{-\vF}}$ \cite{SJ18} &    $1$  &\TBstrut\\[0.9em]
                \cline{1-3}
\multicolumn{3}{l}{\footnotesize $\ast$ $\vN$, $\vF$: the numbers of servers and files, respectively.}\\[-0.6em]
\multicolumn{3}{l}{\footnotesize $\dagger$ Shared randomness among servers is necessary.}\\[-0.6em]
\multicolumn{3}{l}{\footnotesize ${\ddagger}$ Capacities are derived with the strong converse bounds.}\\
\end{tabular}
}
\end{center}
\end{table}

    In this paper, as quantum extensions of the classical PIR capacities \cite{SJ16, SJ16-2, SJ18},
    we show that the capacities of QPIR, symmetric QPIR, and multi-round QPIR are $1$.
    We derive the QPIR capacity when a user retrieves a file secretly from non-communicating $\vN$ servers each of which contains the whole set of $\vF$ files by downloading quantum states 
    under the assumption that an entangled state is shared previously among all servers.
    We evaluate the security of a QPIR protocol with three parameters: the retrieval error probability, the user secrecy that the identity of the queried file is unknown to any individual server, and the server secrecy that the user obtains no more information than the target file. 
As a main result, 
    we show that the QPIR capacity is $1$ 
    regardless of whether it is of exact/asymptotic security and with/without the restriction that the upload cost is negligible to the download cost.
We propose a rate-one QPIR protocol with perfect security and finite upload cost.
We prove the converse bound that the rate of any QPIR protocol is less than $1$ even with no secrecy, no upload constraint, and any error probability.
    Moreover, we show that the capacity of multi-round QPIR is $1$. 
    We prove the weak converse bound of the multi-round QPIR capacity, i.e., the upper bound when the error probability is asymptotically zero.

\begin{table}[t]   \label{tab:protocol}
\renewcommand{\arraystretch}{1.6}
\begin{center}
\caption[caption]{Comparison of protocols in this paper and \cite{TSC18} }
\begin{tabular}{|M{9em}|M{8em}|M{9em}|N}
\cline{1-3}
                & This paper & Paper \cite{TSC18}  & \\
                \cline{1-3}
Server secrecy  & Yes & No  \\
                \cline{1-3}
Capacity  & $1$ & $\!(1\!-\!{\vN}^{-1})/(1\!-\!{\vN}^{-\vF})\!$   &\\
                \cline{1-3}
Condition for capacity $1$
                &  $\vN\geq 2$     &  $\vN\to \infty$  &\\
                \cline{1-3}
Upload cost
                & $2\vF$ bits &   
                $\displaystyle \vN(\vF-1)\log {\vN}$ bits &\TBstrut\\
                \cline{1-3}
Possible file sizes
&
$\{\ell^2\}_{\ell=2}^{\infty}$ &  $\{\ell^{\vN-1}\}_{\ell=2}^{\infty}$   &\\
                \cline{1-3}
\multicolumn{3}{l}{\footnotesize $\ast$ Server secrecy is the property that the user obtains no information}\\[-0.8em]
\multicolumn{3}{l}{\footnotesize \phantom{$\ast$} other than the target file.}\\[-0.6em]
\multicolumn{3}{l}{\footnotesize $\dagger$ $\vN$, $\vF$: the numbers of servers and files, respectively.}\\[-0.6em]
\multicolumn{3}{l}{\footnotesize $\ddagger$ Upload cost is the total bits which are sent to the servers.}\\
\end{tabular}
\end{center}
\end{table}


    Our capacity-achieving protocol has several remarkable advantages compared to the protocol \cite{TSC18} with the minimum upload cost and the minimum file size (see Table \ref{tab:protocol}).
    First, 
	our protocol is a symmetric QPIR protocol
    which guarantees the server secrecy,
    i.e.,
    the user obtains no information of files other than the retrieved one.
    This contrasts with the protocol in \cite{TSC18} that retrieves some information of the other files.
    Secondly, our protocol keeps the secrecy against the malicious user and servers.
    That is, 
    the user cannot obtain more information than the target file 
    even if the user sends malicious queries to the servers, 
    and 
    the servers cannot obtain the identity of the user's target file
    even if the servers answer malicious file information.
    Thirdly, the rate $1$ of our protocol is greater than the rate $(1-{\vN}^{-1})/(1-{\vN}^{-\vF})$ of the protocol in \cite{TSC18}.
    Fourthly, 
    our protocol achieves the capacity $1$ only with two servers. That is, in the sense of the QPIR capacity, there is no benefit to using more than two servers.
    On the other hand, in the protocol in \cite{TSC18}, 
    the capacity is strictly increasing in the number of servers and strictly decreasing in the number of files, 
    and an infinite number of servers are needed to achieve the capacity 1.
    Fifthly, our protocol needs the upload of 2$\vF$ bits
    whereas the protocol in \cite{TSC18} needs $(\vN(\vF-1)\log \vN)$-bit upload.
    Lastly, our protocol is defined when the file size $\vM$ is the square of any integer,
    but the protocol in \cite{TSC18} requires the file size $\vM$ to be the $(\vN-1)$-th power of any integer.
%

     The converse proofs of the QPIR capacities are much simpler than those of the PIR capacities \cite{SJ16,SJ16-2, SJ18}.
     Whereas the papers \cite{SJ16, SJ16-2, SJ18} used several entropic inequalities based on the assumptions on the PIR problem,
     our converse bounds are concisely derived without using any secrecy conditions but only by focusing on the download step of QPIR protocol.

    It should be noted that our QPIR protocol can be considered as a distributed version of Oblivious Transfer (OT) \cite{Rabin81, EGL85}.
    OT is equivalent to the symmetric PIR with one server and
    therefore, the symmetric PIR with multiple servers can be considered as a distributed version of OT.
    OT is an important cryptographic protocol because the free uses of an OT protocol construct an arbitrary secure multiparty computation \cite{Kilian88, IPS08}.
    Unfortunately, the symmetric classical PIR cannot be constructed without shared randomness among servers \cite{GIKM00}.
    On the other hand, the paper \cite{KdW04} showed that the two-way quantum communication between the user and the servers enables the symmetric PIR without shared randomness.
    Our result extends the result \cite{KdW04} so that 
    the symmetric PIR can also be constructed without shared randomness even for the case of classical upload, quantum download, and prior entanglement among the servers. 
    {Note that} if the quantum upload is allowed to our model, 
		we do not need the assumption of the prior entanglement because the user can upload an entangled state to all servers.

We assume that the servers share an entangled state before the protocol starts, but do not communicate. 
Similarly, the prior entanglement by non-communicating multiple parties has been assumed in another quantum protocol, called Multi-prover Quantum Interactive Proof (MQIP) \cite{KM03,RUV12,CHTW04}.
In MQIP, to solve a computational problem, a computationally-limited verifier sends queries to multiple provers who do not communicate with each other and have quantum computers, 
	receives answers from them, and then verifies whether the answers from the provers give a correct solution of the problem.
Similarly to our assumption, the MQIP studies \cite{KM03,RUV12,CHTW04} assumed that the provers share an entangled state before the protocol starts, but do not communicate with each other.
However, 
	even if the assumption of shared entanglement is similar, our communication model is different from those in \cite{KM03,RUV12,CHTW04}:
the papers \cite{KM03,RUV12} treated the quantum queries and quantum answers, and the paper \cite{CHTW04} treated the classical queries and classical answers,
but our paper treats classical queries and quantum answers.

The remaining of this paper is organized as follows.
Section~\ref{sec:problem} 
presents the formal definition of the QPIR protocol and the QPIR capacity 
and proposes the QPIR capacity theorem.
Section~\ref{sec:achieve} constructs the rate-one QPIR protocol
and {analyzes the security of our protocol against the malicious user and servers.}
Section~\ref{sec:converse} proves the converse bound.
{Section~\ref{sec:multi} extends the result to the capacity of multi-round QPIR.}
Section~\ref{sec:conclusion} is the conclusion of this paper.

\subsubsection*{Notations}
\textit{
For any set $\mathcal{T}$, we denote by $|\mathcal{T}|$ the cardinality of the set $\mathcal{T}$ and
by $\mathsf{I}_{\mathcal{T}}$ (or $\mathsf{I}$) the identity operator on $\mathcal{T}$.
For any matrix $\mathsf{B}$, we denote by $\bar{\mathsf{B}}$ the complex conjugate of  $\mathsf{B}$ and $\mathsf{B}^{\dagger}\coloneqq \bar{\mathsf{B}}^{\top}$.
The set of integers is denoted by $\mathbb{Z}$, and $\mathbb{Z}_d\coloneqq\mathbb{Z}/d\mathbb{Z}$ for any integer $d$.
$\Pr_X[A]$ denotes the probability that the variable $X$ satisfies the condition $A$.
For any quantum system $\cA$, we denote the set of all quantum states on $\cA$ by $\mathcal{S}(\cA)$.
}

\section{QPIR Protocol and Main Theorem} \label{sec:problem}

In this section, we formally define the QPIR protocol and its capacity
and presents the main theorem of the paper.
{For preliminaries on quantum information theory, see Appendix~\ref{append:qttheory}.}


\begin{small}
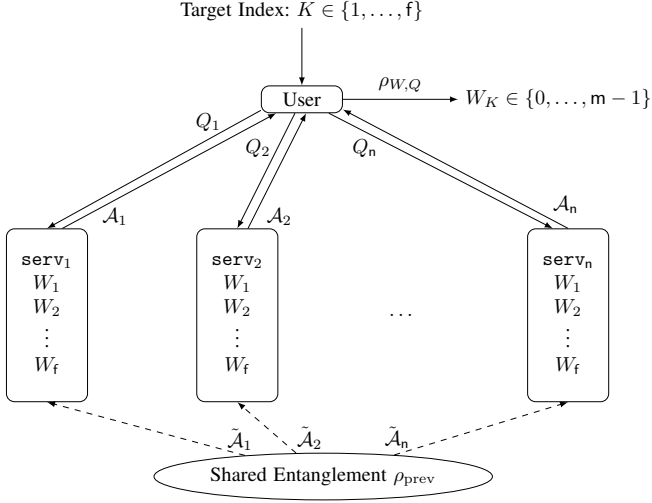
\begin{figure}[t]
\begin{center}
        \resizebox {1\linewidth} {!} {
\begin{tikzpicture}[scale=0.5, node distance = 3.3cm, every text node part/.style={align=center}, auto]
    \node [block] (user) {User};
    \node [above=1cm of user] (sent) {Target Index: $K\in\{1,\ldots,\vF\}$};
    \node [block,minimum height = 3cm, below left=2cm and 3cm of user] (serv1) {$\mathtt{serv}_1$\\$W_1$\\$W_2$\\\vdots\\$W_{\vF}$};
    \node [block,minimum height = 3cm, right of=serv1] (serv2) {$\mathtt{serv}_2$\\$W_1$\\$W_2$\\\vdots\\$W_{\vF}$};
    \node [right=1.8cm of serv2] (ten) {$\cdots$};
    \node [block,minimum height = 3cm, right=1.8cm of ten] (servn) {$\mathtt{serv}_{\vN}$\\$W_1$\\$W_2$\\\vdots\\$W_{\vF}$};
    
    \node [draw,ellipse,below right=1cm and 2cm of serv1] (shared) {Shared Entanglement $\rho_{\mathrm{prev}}$};
    
    \node [right=2cm of user] (receiv) {$W_{K} \in \{0,\ldots,\vM-1\}$};
    
    \path [line] (sent) -- (user);
    
    \path [line] (user.195) --node[pos=0.1,left=2mm] {$Q_1$} (serv1.north);
    \path [line] (user) --node[pos=0.3,left] {$Q_2$} (serv2.north);
    \path [line] (user) --node[pos=0.3,left=2mm] {$Q_{\vN}$} (servn.100);
    
    \path [line] (serv1.80)--node[pos=0.1,right=2mm] {$\cA_1$} (user);
    \path [line] (serv2.83) --node[pos=0.1,right=1mm] {$\cA_2$} (user.287);
    \path [line] (servn.north) --node[pos=0.2,right=4mm] {$\cA_{\vN}$} (user.346);
    
    \path [line] (user.east) -- node{$\rho_{W,Q}$} (receiv);
    
    \path [line,dashed] (shared) --node[pos=0.3,right=6mm] {$\tilde{\cA}_1$} (serv1.south);
    \path [line,dashed] (shared) --node[pos=0.3,right=2mm] {$\tilde{\cA}_2$} (serv2.south);
    \path [line,dashed] (shared) --node[pos=0.33,left=6mm] {$\tilde{\cA}_{\vN}$} (servn.south);
\end{tikzpicture}
}
\caption{Quantum private information retrieval protocol with multiple servers. The composite system of the servers is initialized to an entangled state $\rho_{\mathrm{perv}}$.}
\end{center}
\end{figure}
\end{small}

\subsection{Formal definition of QPIR protocol} \label{sec:problem_statement}

In this paper, we consider the QPIR with multiple servers described as follows.
Let $\vN,\vF,\vM$ be integers greater than $1$.
The participants of the protocol are one user and $\vN$ servers.
The servers do not communicate with each other and each server contains the whole set of uniformly and independently distributed $\vF$ files $W_1,\ldots,W_{\vF} \in \{0,\ldots,\vM-1\}$.
Each server $\mathtt{serv}_t$ possesses a quantum system $\tilde{\cA}_t$ and the $\vN$ servers share an entangled state $\rho_{\mathrm{prev}}\in\mathcal{S}(\bigotimes_{t=1}^{\vN} \tilde{\cA}_t)$.
The user chooses the target file index $K$ to retrieve the $K$-th file $W_K$, 
where the distribution of $K$ is uniform and independent of the files $W_1,\ldots, W_{\vF}$.

To retrieve the $W_K$, 
the user chooses a random variable $R_{\mathrm{user}}$ in a set $\mathcal{R}_{\mathrm{user}}$ and encodes the queries by user encoder $\mathsf{Enc}_{\mathrm{user}}$:
\begin{align*}
 \mathsf{Enc}_{\mathrm{user}}(K,R_{\mathrm{user}}) = (Q_1,\ldots,Q_{\vN}) \in \mathcal{Q}_1\times\cdots\times\mathcal{Q}_{\vN} ,
\end{align*}
where $\mathcal{Q}_t$ is the set of query symbols to the $t$-th server for any $t\in\{1,\ldots,\vN\}$.
The $\vN$ queries $Q_1,\ldots, Q_{\vN}$ are sent to the servers $\mathtt{serv}_1,\ldots,\mathtt{serv}_{\vN}$, respectively.
After receiving the $t$-th query $Q_t$, each server $\mathtt{serv}_t$ 
applies a Completely Positive Trace-Preserving (CPTP) map
$\Lambda_t$ from $\tilde{\cA}_t$ to $\cA_t$ depending on $Q_t,W_1,\ldots,W_{\vF}$
and sends the quantum system $\cA_t$ to the user.
With the server encoder $\mathsf{Enc}_{\mathrm{serv}_t}$, the map $\Lambda_t$ is written as 
\begin{align*}
\Lambda_t = \mathsf{Enc}_{\mathrm{serv}_t}(Q_t,W_1,\ldots,W_{\vF}),
\end{align*}
and the received state of the user is written as
\begin{align}
\rho_{W,Q} \coloneqq \Lambda_1\otimes\cdots\otimes \Lambda_{\vN} (\rho_{\mathrm{prev}}) \in \mathcal{S}\paren*{\bigotimes_{t=1}^{\vN} \cA_t},
\label{def:encodedst}
\end{align}
where $W\coloneqq(W_1,\ldots,W_{\vF})$ and $Q\coloneqq(Q_1,\ldots,Q_{\vN})$.
Next, the user retrieves the file $W_{K}$ by a decoder which is defined depending on $K,Q$ as 
a Positive Operator-Valued Measure (POVM) $\mathsf{Dec}(K,Q)\coloneqq\{\mathsf{Y}_{M}\}_{M=0}^{\vM}$.
The protocol outputs the measurement outcome $M\in\{0,\ldots,\vM\}$ and if $M=\vM$, it is considered as the retrieval failure.

\subsubsection{Protocol}
When the numbers $\vN$ and $\vF$ of servers and files are fixed, 
a QPIR protocol of file size $\vM$ is formulated by the $4$-tuple $\Psi_{\mathrm{QPIR}}^{(\vM)} \coloneqq (\rho_{\mathrm{prev}},\mathsf{Enc}_{\mathrm{user}},{\mathsf{Enc}_{\mathrm{serv}}},\mathsf{Dec})$ of the shared entangled state $\rho_{\mathrm{prev}}$ in the servers, the user encoder $\mathsf{Enc}_{\mathrm{user}}$, the collection of the server encoders $\mathsf{Enc}_{\mathrm{serv}}\coloneqq(\mathsf{Enc}_{\mathrm{serv}_1}, \ldots, \mathsf{Enc}_{\mathrm{serv}_{\vN}})$, and the decoder $\mathsf{Dec}$.

\subsubsection{Security}
For any $t\in\{1,\ldots,\vN\}$,
let $\mathfrak{user}(\Psi_{\mathrm{QPIR}}^{(\vM)})$ and $\mathfrak{serv}_t(\Psi_{\mathrm{QPIR}}^{(\vM)})$ be 
the information of the user and the server $\mathtt{serv}_t$ at the end of the protocol $\Psi_{\mathrm{QPIR}}^{(\vM)}$, respectively.
The security of a QPIR protocol $\Psi_{\mathrm{QPIR}}^{(\vM)}$ is evaluated by the error probability, the server secrecy, and the user secrecy defined as
\begin{align}
P_{\mathrm{err}}(\Psi_{\mathrm{QPIR}}^{(\vM)}) 
&\coloneqq \Pr_{W,K,Q} [M \neq W_K], 
\label{eq:error_prrr}\\
%
S_{\mathrm{serv}}(\Psi_{\mathrm{QPIR}}^{(\vM)}) &\coloneqq I( W_{K^c}; \mathfrak{user}(\Psi_{\mathrm{QPIR}}^{(\vM)}) | K),  \label{ineq:sp}\\
S_{\mathrm{user}}(\Psi_{\mathrm{QPIR}}^{(\vM)}) &\coloneqq \max_{t\in\{1,\ldots,\vN\}} I(K ; \mathfrak{serv}_t(\Psi_{\mathrm{QPIR}}^{(\vM)})), \label{ineq:up}
\end{align}
where $I(\cdot;\cdot|\cdot)$ denotes the conditional mutual information and $W_{K^c} \coloneqq (W_1,\ldots,W_{K-1},W_{K+1},\ldots,W_{\vF})$.
If $S_{\mathrm{serv}}(\Psi_{\mathrm{QPIR}}^{(\vM)})=0$, 
the non-targeted files $W_{K^c}$ are independent of the user information.
Similarly, 
if $S_{\mathrm{user}}(\Psi_{\mathrm{QPIR}}^{(\vM)})=0$, 
the target file index $K$ is independent of any individual server information.

\subsubsection{Costs, rate, and capacity}
Given a QPIR protocol $\Psi_{\mathrm{QPIR}}^{(\vM)}$, we define the upload cost, the download cost, and the QPIR rate by  
\begin{align}
U(\Psi_{\mathrm{QPIR}}^{(\vM)}) &\coloneqq \sum_{t=1}^{\vN} \log |\mathcal{Q}_t|,    \\
D(\Psi_{\mathrm{QPIR}}^{(\vM)}) &\coloneqq \sum_{t=1}^{\vN} \log \dim\cA_t,  \\    
R(\Psi_{\mathrm{QPIR}}^{(\vM)}) &\coloneqq\frac{\log \vM}{D(\Psi_{\mathrm{QPIR}}^{(\vM)})}.
\end{align}
The upload cost, the download cost, and the QPIR rate evaluate {respectively} the size of the whole query set $\mathcal{Q}_1\times \cdots \times \mathcal{Q}_\vN$, the dimension of the downloaded quantum systems $\cA_1\otimes \cdots \otimes \cA_{\vN}$,
and the efficnecy of the protocol. 
When the base of the logarithm is two, the QPIR rate means the number of retrieved bits per one qubit download.

The QPIR capacity is the optimal QPIR rate when the numbers of servers and files are fixed, and we define it with constraints on the security parameters and upload cost.
The {\em asymptotic security-constrained capacity} and the {\em exact security-constrained capacity} are defined 
with $\alpha\in [0,1)$ and  $\beta,\gamma,\theta\in[0,\infty]$ by
\begin{align*}
C_{\mathrm{asymp}}^{\alpha,\beta,\gamma,\theta} 
				& \coloneqq \sup_{\eqref{con1}} 
				\liminf_{\ell\to\infty} R(\Psi_{\mathrm{QPIR}}^{(\vM_\ell)}),\\
C_{\mathrm{exact}}^{\alpha,\beta,\gamma,\theta} 
				& \coloneqq \sup_{\eqref{con2}}
				\liminf_{\ell\to\infty} R(\Psi_{\mathrm{QPIR}}^{(\vM_\ell)}),
\end{align*}
where the supremum is taken for sequences $\{\vM_\ell\}_{\ell=1}^{\infty}$ such that $\lim_{\ell\to\infty} \vM_\ell = \infty$
and for sequences $\{\Psi_{\mathrm{QPIR}}^{(\vM_\ell)}\}_{\ell=1}^{\infty}$ of QPIR protocols
to satisfy either \eqref{con1} or \eqref{con2} given by
\begin{align}    \label{con1} 
\begin{split} 
\!\!\!\limsup_{\ell\to\infty} P_{\mathrm{err}}(\Psi_{\mathrm{QPIR}}^{(\vM_\ell)}) \leq \alpha,  \enskip
& \limsup_{\ell\to\infty} S_{\mathrm{serv}}(\Psi_{\mathrm{QPIR}}^{(\vM_\ell)}) \leq \beta,    \\
\!\!\!\limsup_{\ell\to\infty} S_{\mathrm{user}}(\Psi_{\mathrm{QPIR}}^{(\vM_\ell)}) \leq \gamma, \enskip
& \limsup_{\ell\to\infty} \frac{U(\Psi_{\mathrm{QPIR}}^{(\vM_\ell)})}{D(\Psi_{\mathrm{QPIR}}^{(\vM_\ell)})} \leq  \theta,
\end{split}
\end{align}
and 
\begin{gather}  \label{con2} 
\begin{split}
 P_{\mathrm{err}}(\Psi_{\mathrm{QPIR}}^{(\vM_\ell)}) \leq \alpha, \enskip
& S_{\mathrm{serv}}(\Psi_{\mathrm{QPIR}}^{(\vM_\ell)}) \leq \beta,   \\
 S_{\mathrm{user}}(\Psi_{\mathrm{QPIR}}^{(\vM_\ell)}) \leq \gamma, \enskip
& \limsup_{\ell\to\infty} \frac{U(\Psi_{\mathrm{QPIR}}^{(\vM_\ell)})}{D(\Psi_{\mathrm{QPIR}}^{(\vM_\ell)})} \leq  \theta
.
\end{split}
\end{gather}
It is trivial from the definition that for any $\alpha\in [0,1)$ and $\beta,\gamma,\theta\in[0,\infty]$,
\begin{align}
C_{\mathrm{exact}}^{0,0,0,0}
\leq C_{\mathrm{exact}}^{\alpha,\beta,\gamma,\theta}
\leq C_{\mathrm{asymp}}^{\alpha,\beta,\gamma,\theta} 
\leq C_{\mathrm{asymp}}^{\alpha,\infty,\infty,\infty}.
\label{ineq:capacities}
\end{align}

\subsection{Main Result} \label{sec:main_result}

The main theorem of this paper is given as follows.
\begin{theo} 
When servers can share prior entanglement,
	the capacity of the quantum private information retrieval for $\vF\geq 2$ files and $\vN\geq 2$ servers is
$$ C_{\mathrm{exact}}^{\alpha,\beta,\gamma,\theta}
= C_{\mathrm{asymp}}^{\alpha,\beta,\gamma,\theta} =1,$$
for any $\alpha\in [0,1)$ and $\beta,\gamma,\theta\in[0,\infty]$.
\end{theo}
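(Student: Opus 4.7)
\begin{proofidea}
The plan is to use the chain \eqref{ineq:capacities} to reduce the theorem to its two extreme entries: achievability $C_{\mathrm{exact}}^{0,0,0,0}\geq 1$ under the strictest parameters, and the strong converse $C_{\mathrm{asymp}}^{\alpha,\infty,\infty,\infty}\leq 1$ under the loosest ones for every $\alpha\in[0,1)$. Every capacity sandwiched between them is then forced to equal $1$.

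For the converse I would ignore both secrecy conditions and the upload budget and argue from dimensionality alone. Fix any protocol $\Psi_{\mathrm{QPIR}}^{(\vM)}$ with combined download dimension $d\coloneqq\prod_t\dim\cA_t=2^D$. Since $\sum_{w=0}^{\vM-1}\mathsf{Y}_w^{k,q}\leq\mathsf{I}$ and $\rho_{W,Q}\leq\mathsf{I}$ give $\Tr(\mathsf{Y}_w^{k,q}\rho_{W,Q})\leq\Tr(\mathsf{Y}_w^{k,q})$, averaging the conditional success probability over $(K,Q,W_{K^c})$ and the uniform $W_K$ yields
\begin{equation*}
1-P_{\mathrm{err}}(\Psi_{\mathrm{QPIR}}^{(\vM)})\leq \frac{d}{\vM}=2^{-(R-1)D},
\end{equation*}
where $R\coloneqq\log\vM/D$. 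Any $R>1$ then drives $P_{\mathrm{err}}\to 1$ exponentially in $D$, contradicting $\limsup P_{\mathrm{err}}\leq\alpha<1$; no secrecy hypothesis is invoked, matching the promise that the converse is concise.

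For the achievability I would fuse the classical two-server PIR trick with qudit superdense coding. Fix $\ell\geq 2$, set $\vM=\ell^2$, identify each $W_j$ with a pair $(a_j,b_j)\in\mathbb{Z}_\ell\times\mathbb{Z}_\ell$, and let the two servers share $|\Phi_\ell\rangle=\ell^{-1/2}\sum_{i=0}^{\ell-1}|i\rangle|i\rangle$. The user draws a uniformly random subset $S\subseteq\{1,\ldots,\vF\}$ and sends $Q_1=S$ and $Q_2=S\triangle\{K\}$, an upload of $2\vF$ bits. Each server~$t$ computes $\sigma_t=\sum_{i\in Q_t}(a_i,b_i)$ in $\mathbb{Z}_\ell\times\mathbb{Z}_\ell$; server~$1$ applies $X^{\sigma_1^{(1)}}Z^{\sigma_1^{(2)}}$ and server~$2$ applies $(X^{\sigma_2^{(1)}}Z^{\sigma_2^{(2)}})^{-1}$ to its half of $|\Phi_\ell\rangle$. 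Via $(U\otimes\mathsf{I})|\Phi_\ell\rangle=(\mathsf{I}\otimes U^{\top})|\Phi_\ell\rangle$ and the Weyl--Heisenberg commutation, the two encodings collapse, up to a global phase computable from $K$ and $S$, onto a single generalized Bell state with label $\sigma_1-\sigma_2=\pm W_K$; a Bell-basis measurement then recovers $W_K$ perfectly. Each query is uniform on $2^{\vF}$ subsets and independent of $K$ (exact user secrecy), the encoded state depends on $W$ only through $W_K$ (exact server secrecy), the rate is $\log\vM/\log\ell^2=1$, and $U/D=2\vF/\log\vM\to 0$ as $\ell\to\infty$. This certifies $C_{\mathrm{exact}}^{0,0,0,0}\geq 1$.

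The main obstacle I foresee is the algebraic cancellation at the heart of the achievability: one has to verify that the product $U_1U_2^{\top}$ of the two server Paulis, pushed through $|\Phi_\ell\rangle$, reduces up to a global phase to a single Pauli labelled by the \emph{difference} $\sigma_1-\sigma_2$, so that the entire $W_{K^c}$-dependence cancels and only $\pm W_K$ survives. Once this identity is in hand, perfect decoding and exact server secrecy drop out together; the converse above is essentially a one-line dimension count with no further quantum-information machinery required.
\end{proofidea}
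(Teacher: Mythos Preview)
Your overall architecture matches the paper exactly: reduce to the endpoints of \eqref{ineq:capacities}, prove $C_{\mathrm{exact}}^{0,0,0,0}\ge 1$ via a two-server superdense-coding protocol built on the classical subset-$\triangle\{K\}$ trick, and prove $C_{\mathrm{asymp}}^{\alpha,\infty,\infty,\infty}\le 1$ by a dimension argument that ignores all secrecy. Your achievability is essentially the paper's protocol (Section~\ref{sec:protocol}); the only cosmetic difference is that the paper has server~2 apply $\overline{\mathsf{W}(H_2)}$ rather than $\mathsf{W}(\sigma_2)^{-1}$, which makes the cancellation one line via $(\mathsf{U}\otimes\overline{\mathsf{U}})|\mathsf{I}\rangle=|\mathsf{I}\rangle$. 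One small slip: the global phase you obtain is not computable from $K$ and $S$ alone, since it involves $H_2=\sum_{i\in Q_2}W_i$; but as you implicitly use, global phases are unobservable, so server secrecy and perfect decoding still follow.

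Your converse, however, is genuinely more elementary than the paper's. The paper (Section~\ref{sec:converse}) invokes the data-processing inequality for the quantum R\'enyi relative entropy $D_{1+s}$ to obtain $1-P_{\mathrm{err}}\le(d/\vM)^{s/(1+s)}$ for $s\in(0,1)$. Your argument bypasses all of that: from $\rho_{W,Q}\le\mathsf{I}$ and $\sum_{w<\vM}\mathsf{Y}_w\le\mathsf{I}$ you get $1-P_{\mathrm{err}}\le d/\vM$ directly, which is both simpler and tighter (exponent $1$ instead of $s/(1+s)<1$). Both bounds force $P_{\mathrm{err}}\to 1$ whenever $\liminf R>1$, so both yield the same strong converse; yours just does it with less machinery.
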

\begin{proof}
In Sections \ref{sec:achieve} and \ref{sec:converse},
we will prove $C_{\mathrm{exact}}^{0,0,0,0} \geq 1$ and 
$C_{\mathrm{asymp}}^{\alpha,\infty,\infty,\infty} \leq 1$ for any $\alpha\in[0,1)$, respectively.
Then, the inequality \eqref{ineq:capacities} implies the theorem. 
\end{proof}

Note that the capacity does not depend on the number of files $\vF$ and the number of servers $\vN$. 
This contrasts with the classical PIR capacity \cite{SJ16}, which is strictly decreasing for $\vF$ and strictly increasing for $\vN$.
Moreover, the capacity does not depend on the security constraints, i.e., there is no trade-off between the capacity and the constraints $\alpha,\beta,\gamma,\theta$.
Furthermore, the theorem implies that the symmetric QPIR capacity is $1$.

\begin{remark}
In our QPIR model, we assumed that the files $W_1,\ldots, W_{\vN}$ are uniformly random and mutually independent.
		However, the assumption is necessary only for proving the converse bounds.
		Without the assumption, our QPIR protocol has no error and achieves the perfect server and user secrecies.
\end{remark}

\section{Construction of Protocol} \label{sec:achieve}

In this section, we construct a rate-one two-server QPIR protocol with the perfect security and negligible upload cost.
Our protocol is constructed if the file size $\vM$ is the square of an arbitrary integer $\ell$.
Then, by taking $\vM_\ell = \ell^2$, the sequence $\{\Psi_{\mathrm{QPIR}}^{(\vM_\ell)}\}_{\ell=1}^{\infty}$ of our protocols achieves the rate $1$ with 
the perfect security and negligible upload cost, which implies 
\begin{align}
C_{\mathrm{exact}}^{0,0,0,0} \geq 1.  \label{ineq:ac}
\end{align}
In the following, we give preliminaries on quantum operations and states in Section \ref{sec:prelim} and construct the QPIR protocol in Section \ref{sec:protocol}.

\subsection{Preliminaries} \label{sec:prelim}

For an arbitrary integer $\ell\geq 2$, let $\cA$ be an $\ell$-dimensional Hilbert space spanned by an orthonormal basis $\{|0\rangle, \ldots, |\ell-1\rangle\}$.
Define a maximally entangled state $|\Phi\rangle$ on $\cA\otimes \cA$ by
\begin{align*}
|\Phi \rangle \coloneqq  \frac{1}{\sqrt{\ell}} \sum_{i=0}^{\ell-1}| i\rangle \otimes |i \rangle.
\end{align*}
For $a,b\in\mathbb{Z}_\ell$, 
the generalized Pauli operators on $\cA$ are defined as 
\begin{gather*}
\mathsf{X} \coloneqq \sum_{i=0}^{\ell-1} |i+1\rangle \langle i |,\quad
\mathsf{Z} \coloneqq \sum_{i=0}^{\ell-1} \omega^{i} |i\rangle \langle i |,\quad
\end{gather*}
where $\omega = \exp(2\pi\sqrt{-1}/\ell)$, 
and the discrete Weyl operators are defined as
\begin{align*}
\mathsf{W}(a,b) \coloneqq \mathsf{X}^a\mathsf{Z}^b = \sum_{i=0}^{\ell-1} \omega^{ib} |i+a\rangle \langle i | .
\end{align*}
These operators satisfy the relations
\begin{align*}
\mathsf{Z}^b\mathsf{X}^a &= \omega^{ba} \mathsf{X}^a\mathsf{Z}^b,\\
\mathsf{W}(a_1,b_1)\mathsf{W}(a_2,b_2) &= \omega^{b_1a_2} \mathsf{W}(a_1+a_2, b_1+b_2), \\
\mathsf{W}(a,b)^\dagger &= \omega^{ba} \mathsf{W}(-a,-b).
\end{align*}
For any matrix $\mathsf{T} \coloneqq \sum_{i,j=0}^{\ell-1} t_{ij} |i\rangle\langle j|$ on $\cA$,
we define the state $|\mathsf{T}\rangle$ in $\cA\otimes \cA$ by
$$|\mathsf{T}\rangle \coloneqq \sum_{i,j=0}^{\ell-1} t_{ij} |i\rangle \otimes |j\rangle.$$
With this notation, the maximally entangled state is written as $|\Phi\rangle = (1/\sqrt{\ell})|\mathsf{I}\rangle$.
Since $\mathsf{T}^{\top} = \sum_{i,j=0}^{\ell-1} t_{ij} |j\rangle\langle i|$,
it holds $|\mathsf{T}\rangle = (\mathsf{T}\otimes \mathsf{I}) |\mathsf{I}\rangle = (\mathsf{I}\otimes \mathsf{T}^{\top}) |\mathsf{I}\rangle$.
Moreover, for any unitaries $\mathsf{U},\mathsf{V}$ on $\cA$, we have
\begin{align}
(\mathsf{U}\otimes \mathsf{V}) |\mathsf{T}\rangle &= 
|\mathsf{U}\mathsf{T}\mathsf{V}^{\top}\rangle,  \nonumber \\
(\mathsf{U}\otimes \overline{\mathsf{U}}) |\mathsf{I}\rangle &= |\mathsf{U}\mathsf{U}^\dagger\rangle = |\mathsf{I}\rangle. \label{eq:uuI}
\end{align}
With the basis given in the following proposition, we construct the measurement in our QPIR protocol.
\begin{prop}
The set 
$$\mathcal{B} \coloneqq \{ (\mathsf{W}(a,b)\otimes  \mathsf{I} )|\Phi \rangle  \mid a,b\in\mathbb{Z}_\ell \}$$
is an orthonormal basis of $\cA\otimes \cA$.
\begin{proof}
Since $\mathsf{W}(a,b) \otimes \mathsf{I}$ is a unitary matrix for any $a,b\in\mathbb{Z}_\ell$, all elements in $\mathcal{B}$ are unit vectors.
Then, it is sufficient to show that every different two vectors in $\mathcal{B}$ are mutually orthogonal: for any different $(a,b), (c,d)\in\mathbb{Z}_{\ell}^2$,
\begin{align}
( (\mathsf{W}(a,b)\otimes \mathsf{I} )|\Phi \rangle )^{\dagger} (\mathsf{W}(c,d) \otimes \mathsf{I} )|\Phi \rangle = 0 .    \label{eq:ortho}
\end{align}
Since 
$\mathsf{W}(a,b)^{\dagger}\mathsf{W}(c,d) 
= \omega^{b(a-c)}\mathsf{W}(c-a,d-b),$
the left-hand side of \eqref{eq:ortho} is written as 
$$\omega^{b(c-a)}\langle\Phi| (\mathsf{W}(c-a,d-b) \otimes \mathsf{I})|\Phi \rangle.$$
Moreover, for any $x,z\in\mathbb{Z}_{\ell}$, we have 
\begin{align}
\langle \Phi | (\mathsf{W}(x,z)\otimes \mathsf{I}) |\Phi \rangle 
	&= \frac{1}{\ell} \sum_{i=0}^{\ell-1} \langle i | \mathsf{W}(x,z) |i \rangle \\
	&= \frac{1}{\ell} \sum_{i=0}^{\ell-1} \omega^{iz} \langle i |i+x\rangle \\
	&= \delta_{(x,z),(0,0)}
\end{align}
Thus, Eq.~\eqref{eq:ortho} holds for any $(a,b)\neq (c,d)$, which implies the desired statement.
\end{proof}
\end{prop}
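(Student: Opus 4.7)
The plan is to establish the claim by a dimension count reduction followed by an orthogonality computation built on the Weyl algebra identities just listed. Since $\dim(\cA\otimes\cA) = \ell^2 = |\mathcal{B}|$, it suffices to verify that $\mathcal{B}$ is an orthonormal set. Unit norm is immediate: each generator $\mathsf{W}(a,b)\otimes \mathsf{I}$ is unitary, and $|\Phi\rangle$ is a unit vector, so the image $(\mathsf{W}(a,b)\otimes \mathsf{I})|\Phi\rangle$ is a unit vector as well.

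For orthogonality, I would fix two distinct pairs $(a,b),(c,d) \in \mathbb{Z}_\ell^2$ and rewrite the inner product as $\langle \Phi | (\mathsf{W}(a,b)^\dagger \mathsf{W}(c,d) \otimes \mathsf{I}) | \Phi\rangle$. Using $\mathsf{W}(a,b)^\dagger = \omega^{ba} \mathsf{W}(-a,-b)$ together with the composition rule $\mathsf{W}(a_1,b_1)\mathsf{W}(a_2,b_2) = \omega^{b_1 a_2}\mathsf{W}(a_1+a_2,b_1+b_2)$, the product collapses to a phase times a single Weyl operator $\mathsf{W}(c-a,d-b)$. Hence the whole question reduces to evaluating $\langle \Phi| (\mathsf{W}(x,z)\otimes \mathsf{I})|\Phi\rangle$ for $(x,z)\neq (0,0)$ and showing it vanishes.

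The one nontrivial step is this evaluation, which is essentially the fact that non-identity Weyl operators are traceless. Expanding $|\Phi\rangle = \ell^{-1/2}\sum_i |i\rangle\otimes |i\rangle$ and using $\mathsf{W}(x,z)|i\rangle = \omega^{iz}|i+x\rangle$, the inner product becomes $\ell^{-1}\sum_{i=0}^{\ell-1}\omega^{iz}\langle i|i+x\rangle$. When $x\neq 0$ every summand vanishes; when $x=0$ the sum is $\ell^{-1}\sum_i \omega^{iz}$, which equals $1$ if $z=0$ and otherwise vanishes by the standard geometric-series identity $\sum_{i=0}^{\ell-1}\omega^{iz}=0$ for $z\in\mathbb{Z}_\ell\setminus\{0\}$. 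Combining this with the phase factor from the previous step yields the desired Kronecker delta, and orthonormality of $\mathcal{B}$ follows.

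I do not anticipate a real obstacle here; the only point requiring mild care is tracking the phase $\omega^{b(c-a)}$ that arises from reducing $\mathsf{W}(a,b)^\dagger \mathsf{W}(c,d)$, and noting that this phase is irrelevant once the inner product itself is shown to vanish for $(c-a,d-b)\neq (0,0)$. Conceptually, the proposition is the statement that the Weyl operators form an operator basis in the trace inner product, translated through the Choi--Jamio\l{}kowski correspondence $\mathsf{T}\mapsto |\mathsf{T}\rangle$ introduced just above.
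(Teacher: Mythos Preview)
Your proposal is correct and follows essentially the same route as the paper: reduce to orthogonality via unitarity of $\mathsf{W}(a,b)\otimes\mathsf{I}$, collapse $\mathsf{W}(a,b)^\dagger\mathsf{W}(c,d)$ to a phase times $\mathsf{W}(c-a,d-b)$, and then compute $\langle\Phi|(\mathsf{W}(x,z)\otimes\mathsf{I})|\Phi\rangle=\ell^{-1}\sum_i\omega^{iz}\langle i|i+x\rangle=\delta_{(x,z),(0,0)}$. Your additional remarks on the dimension count and the Choi--Jamio\l{}kowski interpretation are correct but not needed for the argument.
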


\subsection{Rate-one QPIR protocol} \label{sec:protocol}
In this section, we propose a rate-one two-server QPIR protocol with the perfect security and negligible upload cost.
This protocol is constructed from the idea of the classical two-server PIR protocol in \cite[Section 3.1]{CGKS98}.

In this protocol, a user retrieves a file $W_K$ from two servers $\mathtt{serv}_1$ and $\mathtt{serv}_2$.
Each server contains a copy of the files $W_1,\ldots,W_{\vF} \in \{0,\ldots,\ell^2-1 =:\vM_\ell-1\}$ for an arbitrary integer $\ell$. 
By identifying the set $\{0,\ldots,\ell^2-1\}$ with $\mathbb{Z}_\ell^2$, the files $W_1,\ldots,W_{\vF}$ are considered to be elements of $\mathbb{Z}_\ell^2$.
We assume that 
$\mathtt{serv}_1$ and $\mathtt{serv}_2$ possess the $\ell$-dimensional quantum systems $\cA_1$ and $\cA_2$, respectively,
and
the maximally entangled state $|\Phi\rangle$ in $\cA_1\otimes \cA_2$ is shared at the beginning of the protocol.

\subsubsection{Protocol}
The QPIR protocol for retrieving $W_K$ is described as follows.
\begin{enumerate}[Step 1.]
\item Depending on the target file index $K$, the user chooses a subset $R_{\mathrm{user}}$ of $\{1,\ldots,\vF\}$ uniformly.
Let $Q_1\coloneqq R_{\mathrm{user}}$
and 
\begin{align*}
Q_2\coloneqq 
\begin{cases}
Q_1 \setminus \{K\}     &\text{if $K\in Q_1$,}\\
Q_1 \cup \{K\}  &\text{otherwise.}\\
\end{cases}
\end{align*}
\item The user sends the queries $Q_1$ and $Q_2$ to $\mathtt{serv}_1$ and $\mathtt{serv}_2$, respectively.
\item $\mathtt{serv}_1$ calculates $H_1 \coloneqq \sum_{i\in Q_1} W_i \in \mathbb{Z}_\ell^2$
and applies $\mathsf{W}(H_1)$ on the quantum system $\cA_1$.
Similarly, $\mathtt{serv}_2$ calculates $ H_2 \coloneqq \sum_{i\in Q_2} W_i$
and applies $\overline{\mathsf{W}(H_2)}$ to the quantum system $\cA_2$.
The state on $\cA_1\otimes \cA_2$ is $(\mathsf{W}(H_1)\otimes \overline{\mathsf{W}(H_2)}) |\Phi \rangle$.

\item $\mathtt{serv}_1$ and $\mathtt{serv}_2$ send the quantum systems $\cA_1$ and $\cA_2$ 
to the user, respectively.
\item The user performs a POVM
$$\mathsf{Dec}(K,Q)
=
\{ \mathsf{Y}_{(a,b)} \mid a,b\in\mathbb{Z}_\ell \}$$ on the received state $\rho_{W,Q}$,
where each POVM element $\mathsf{Y}_{(a,b)}$ for the outcome $(a,b)$ is defined by
\begin{align*}
\mathsf{Y}_{(a,b)} \coloneqq (\mathsf{W}(a,b) \otimes \mathsf{I})|\Phi \rangle 
\langle \Phi | (\mathsf{W}(a,b)^{\dagger} \otimes \mathsf{I})  
\end{align*} if $K\in Q_1$, and
\begin{align*}
\mathsf{Y}_{(a,b)} \coloneqq (\mathsf{W}(-a,-b) \otimes \mathsf{I})|\Phi \rangle 
\langle \Phi | (\mathsf{W}(-a,-b)^{\dagger} \otimes \mathsf{I} )
\end{align*}
otherwise.
The user obtains the measurement outcome $(a,b)$ as the retrieval result.
\end{enumerate}

\subsubsection{Security}
We analyze the security of the protocol, i.e., the error probability, the server secrecy, and the user secrecy.

The protocol has no error as follows.
Note that $H_1 = H_2 + W_K$ if $K\in Q_1$, and $H_1=H_2-W_K$ otherwise.
After Step 3, the state on $\cA_1\otimes\cA_2$ is
\begin{align}
 \lefteqn{\mathsf{W}(H_1)\otimes \overline{\mathsf{W}(H_2)} |\Phi \rangle }  \nonumber \\
 &= 
 \frac{\omega^{\mp b_{W_K}a_{H_2}}}{\sqrt{\ell}} (\mathsf{W}(\pm W_K) \otimes \mathsf{I}) ( \mathsf{W}(H_2)\otimes \overline{\mathsf{W}(H_2)}) | \mathsf{I} \rangle \label{123} \\
 &= \frac{\omega^{\mp b_{W_K}a_{H_2}}}{\sqrt{\ell}} (\mathsf{W}(\pm W_K) \otimes \mathsf{I}) | \mathsf{I} \rangle \label{234}\\
 &= \omega^{\mp b_{W_K}a_{H_2}} (\mathsf{W}(\pm W_K) \otimes \mathsf{I})| \Phi \rangle,     \nonumber
\end{align}
where $H_2 =(a_{H_2}, b_{H_2})$ and  $W_K=(a_{W_K}, b_{W_K})\in\mathbb{Z}_\ell^2$.
The equality \eqref{123} is derived from $\mathsf{W}(H_1) = \mathsf{W}(\pm W_K+H_2) = \omega^{\mp b_{W_K}a_{H_2}} \mathsf{W}(\pm W_K)\mathsf{W}(H_2)$
and the equality \eqref{234} is from \eqref{eq:uuI}.
Therefore, in Step 5, the measurement outcome is $W_K\in\mathbb{Z}_\ell^2$ with probability $1$.

The perfect server secrecy is obtained because 
the received state $(\mathsf{W}(\pm W_K)\otimes \mathsf{I})|\Phi\rangle$ of the user is independent of the files $W_1,\ldots,W_{K-1}, W_{K+1},\ldots, W_{\vF}$.

The perfect user secrecy follows from that of the protocol \cite[Section 3.1]{CGKS98}.
Note that even if the collection of $Q_1$ and $Q_2$ depends on $K$, 
	each of $Q_1$ and $Q_2$ is individually independent of the index $K$. 
	Thus, the perfect user secrecy is obtained.


\subsubsection{Upload cost, download cost, and rate} 
The upload cost is $U(\Psi_{\mathrm{QPIR}}^{(\vM_\ell)})= {2\vF}\log 2$ since two subsets $Q_1$ and $Q_2$ of ${\{1,\ldots, \vF\}}$ are uploaded and each subset of ${\{1,\ldots, \vF\}}$ is expressed by $\vF$ bits.
The download cost is $D(\Psi_{\mathrm{QPIR}}^{(\vM_\ell)})=\log \dim\cA_1\otimes \cA_2 = \log \ell^2 = \log \vM_\ell$.
Therefore, the rate is 
\begin{align*}
R(\Psi_{\mathrm{QPIR}}^{(\vM_\ell)}) = \frac{\log \vM_\ell}{D(\Psi_{\mathrm{QPIR}}^{(\vM_\ell)})} = 1,
\end{align*}
and 
$U(\Psi_{\mathrm{QPIR}}^{(\vM_\ell)})/D(\Psi_{\mathrm{QPIR}}^{(\vM_\ell)})$ goes to zero as $\vM_\ell\to\infty$.



\subsection{Security against malicious operations}

In the previous subsection, we showed that the protocol in Section \ref{sec:protocol} 
has the perfect security when the user and the servers follow the protocol.
In this subsection, we prove that the protocol in Section \ref{sec:protocol} also guarantees the server and user secrecies even if the servers or the user apply malicious operations.
Namely, we consider two malicious models: the malicious server model and the malicious user model. 

The malicious server model considers the case that the servers apply malicious operations to obtain the target file index $K$
but the user follows the protocol, i.e., the query generation and the recovery by the user are the same as the protocol in Section \ref{sec:protocol}.
Our protocol is secure against this model since
	each of $Q_1$ and $Q_2$ is individually independent of the index $K$
	and the servers obtain no more information from the user except for $Q_1$ and $Q_2$.
Therefore the servers cannot obtain any information of $K$ by malicious operations.

The second security model is the malicious user model, where the user sends malicious queries to the servers to obtain the non-targeted file information in addition to the target file $W_K$.
That is, the user sends malicious queries $Q = (Q_1, Q_2)$ to retrieve both of the file $W_K$ and some information of $W_{K^{c}} = (W_1,\ldots,W_{K-1}, W_{K+1},\ldots, W_{\vF})$.
Similarly to the malicious server model, we assume that the servers do not deviate from the protocol.
Our protocol is also secure against this model
since the user downloads the $\vM_{\ell}$-dimensional quantum system and the user is assumed to obtain $W_K \in \{0,\ldots, \vM_{\ell}-1\}$.
That is, the user cannot obtain more information than $W_K$.
This security is precisely proved by the following relation:
\begin{align}
{I(\cA;W_{K^{c}}|W_K,K,Q)_{\rho_{W,Q}} = 0,} \label{noinf}
\end{align}
where $\cA = \cA_1\otimes \cA_2$
and $I(\cdot;\cdot |\cdot)_{\rho}$ is the quantum conditional mutual information defined in Appendix \ref{append:entropy}.

\begin{proof}[Proof of Eq.~\eqref{noinf}]
Since the user obtains the file $W_K$, we have
\begin{align}
H(W_K|\cA,K,Q)_{\rho_{W,Q}} &= 0,   \label{eq:asdfff}
\end{align}
where $H(\cdot|\cdot)_{\rho}$ is the quantum conditional entropy defined in Appendix \ref{append:entropy}.
Eq. \eqref{eq:asdfff} is equivalent to 
\begin{align}
H(\cA,W_K|K,Q)_{\rho_{W,Q}} &= H(\cA|K,Q)_{\rho_{W,Q}}.   \label{eq:samesame}
\end{align}
The relation \eqref{eq:samesame} implies the following relations:
\begin{align}
0&\leq H(\cA|W_K,K,Q)_{\rho_{W,Q}}  \\
                &= H(\cA,W_K|K,Q)_{\rho_{W,Q}} - H(W_K|K,Q) \\
                &= H(\cA|K,Q)_{\rho_{W,Q}} -\log \vM_\ell \leq 0. \label{eq:lastt}
\end{align}
The equality in \eqref{eq:lastt} follows from the condition \eqref{eq:samesame}, the independence between $W_K$ and $(K,Q)$, and the uniform distribution of $W_K$.
The last inequality in \eqref{eq:lastt} follows from $\dim \cA = \log \vM_\ell$.
Therefore, we have
\begin{align}
H(\cA|W_K,K,Q)_{\rho_{W,Q}} = 0
\end{align}
which implies \eqref{noinf}.
\end{proof}


%
%
%
%

\section{Converse} \label{sec:converse}

In this section, we prove the converse bound 
\begin{align}
C_{\mathrm{asymp}}^{\alpha,\infty,\infty,\infty} \leq 1 \label{ineq:converse_im}
\end{align}
for any $\alpha\in[0,1)$.
By replacing the notation of $\rho_{W,Q}$ defined in \eqref{def:encodedst},
let $\rho_{w,z}$ be the quantum state on the composite system $\bigotimes_{t=1}^{\vN} \cA_t$,
where $w$ is the file to be retrieved
and
$z \coloneqq (w^c,q)$ for the collection $w^c$ of other $\vM-1$ files and
the collection $q$ of queries.

Applying \cite[(4.66)]{Hay17} to the choice $\sigma_{z}=(1/\vM) \sum_{w=0}^{\vM-1} \rho_{w,z}$,
for any $s \in (0,1)$, we have
\begin{align}
 (1-P_{\mathrm{err},z}(\Psi_{\mathrm{QPIR}}^{(\vM)}) )^{1+s} \vM^{s}
 \le  \frac{1}{\vM} \sum_{w=0}^{\vM-1} \Tr \rho_{w,z}^{1+s} \sigma_{z}^{-s},
 \label{eq:feff}
\end{align}
where $P_{\mathrm{err},z}(\Psi_{\mathrm{QPIR}}^{(\vM)})$ is the error probability when $z$ is fixed.
For the completeness of the proof, we give the derivation of \eqref{eq:feff} in Appendix~\ref{append:avvvv}.
Furthermore, we can bound the RHS of \eqref{eq:feff} as 
\begin{align}
 &\frac{1}{\vM} \sum_{w=0}^{\vM-1} \Tr \rho_{w,z}^{1+s} \sigma_{z}^{-s}
 \le  \frac{1}{\vM} \sum_{w=0}^{\vM-1} \Tr \rho_{w,z} \sigma_{z}^{-s}
 =  \Tr \sigma_{z}^{1-s}\nonumber  \\
 &\le \max_{\sigma}\Tr \sigma^{1-s}
 = \max_{p} \sum_{i=1}^d p_i ^{1-s}
 \stackrel{\mathclap{(a)}}{=} 
 \paren*{\prod_{t=1}^{\vN} \dim \cA_t}^s
 \label{eq:jojh}
\end{align}
for $d= \prod_{t=1}^{\vN} \dim \cA_t$. 
Here, since $x \mapsto x^{1-s}$ is concave,
the maximum $\max_{p} \sum_{i=1}^d p_i^{1-s}$
is realized by the uniform distribution, which shows the equation $(a)$.
Combining \eqref{eq:feff} and \eqref{eq:jojh}, the error probability is upper bounded as 
\begin{align}
1-P_{\mathrm{err}}(\Psi_{\mathrm{QPIR}}^{(\vM)})  
	&= 1-\mathbb{E}_Z P_{\mathrm{err},Z}(\Psi_{\mathrm{QPIR}}^{(\vM)})  \\
	&\le \paren*{\frac{{\prod_{t=1}^{\vN} \dim \cA_t}}{\vM}}^{\frac{s}{1+s}}.
	\label{eq:fffefw}
\end{align}

For any sequence of QPIR protocols 
$\{\Psi_{\mathrm{QPIR}}^{(\vM_\ell)}\}_{\ell=1}^{\infty}$,
if $\Psi_{\mathrm{QPIR}}^{(\vM_\ell)}$
satisfies
\begin{align}
R(\Psi_{\mathrm{QPIR}}^{(\vM_\ell)}) 
=
\frac{\log \vM_\ell}{\log \prod_{t=1}^{\vN} \dim\cA_t}
\geq 1
\end{align}
for any sufficiently large $\ell$,
we have $$\frac{\prod_{t=1}^{\vN} \dim \cA_t}{\vM_\ell} \to 0.$$
Hence, by \eqref{eq:fffefw}, $1-P_{\mathrm{err}}(\Psi_{\mathrm{QPIR}}^{(\vM_\ell)})$ approaches zero,
which implies \eqref{ineq:converse_im}.

\section{Capacity of Multi-Round QPIR} \label{sec:multi}

In this section, we prove that the multi-round QPIR capacity is $1$.
First, as a generalization of the protocol description in Section \ref{sec:problem_statement}, we formally define the multi-round QPIR.
Then, we propose the capacity theorem and the proof of the weak converse bound.
Since the protocol in Section \ref{sec:protocol} has the QPIR rate $1$, this protocol also achieves the multi-round QPIR capacity.
The result in this section includes the result in the previous sections as the one-round QPIR.

\begin{figure*}[t]
\begin{center}
        \resizebox {\linewidth} {!} {
\begin{tikzpicture}[scale=0.5, node distance = 3.3cm, every text node part/.style={align=center}, auto]

	
	\node [block, minimum height=6em] (enc1) {\small $\mathsf{Enc}_{\mathtt{serv}}^{(1),Q^{[1]},W}$};
	
	\node [below left=1em and 7em of enc1.145] (start) {$\rho_{\mathrm{prev}}$};
	\node [above left=1em and 7em of enc1.145] (W)  {$W$};
	\node [below left=10em and 7em of enc1.145] (K)  {$K$};
	
	\node [below left=3.2em and 0em of start] (lefts) {};
	\node [right=48em of lefts] (rights) {};
	\path [line,-,dashed] (lefts) -- node[below,pos=0.99]{User} node[above,pos=0.99]{Servers} (rights);
	
	\node [block, minimum height=6em, right=13.5em of enc1] (enc2) {\small $\mathsf{Enc}_{\mathtt{serv}}^{(2),Q^{[2]},W}$};
	
	\node [block, minimum height=6em, below right=3.5em and 4em of enc1] (dec1) {$\mathsf{Dec}^{(1),K,Q^{[1]}}$};
	\node [block, minimum height=6em, right=13.5em of dec1] (dec2) {$\mathsf{Dec}^{(2),K,Q^{[2]}}$};

	\path [line,stealth-] (enc1.145) node[left=2em,above]{$\mathcal{B},W$}--++(0:-9em)|-  (start.east);
	\path [line,-] (enc1.145) --++(0:-9em)|- (W.east) ;

	\path [line] (K.east) --++(0:5em) --node[above,above left=1.5em and -1.5em]{$Q^{1}$} (enc1.215);
	\path [line] (K.east) --++(0:5em)|- node[right=6.5em, above]{$\mathcal{C},K,Q^{[1]}$} (dec1.215);
	
	\path [line] (enc1.35) --node[above]{$\mathcal{B},W,Q^{[1]}$} (enc2.145);
	\path [line] (enc1.325) -- node[below,pos=0.7,left=0.1em]{$\mathcal{A}^{1}$} (dec1.145);
	\path [line] (enc2.325) -- node[below,pos=0.7,left=0.1em]{$\mathcal{A}^{2}$} (dec2.145);


	\path [line] (dec1.35) --node[above,above left=0.9em and -1.4em]{$Q^{2}$} (enc2.215);
	\path [line] (dec1.325) -- node[above]{$\mathcal{C},K,Q^{[2]}$} (dec2.215);

	\path [line] (enc2.35) --node[above]{$\mathcal{B},W,Q^{[2]}$} ++(0:10em);
	\path [line] (dec2.east) -- ++(0:5em) node[right] {$W_K$};

\end{tikzpicture}
}
\caption{The information flow in $2$-round QPIR protocol.
The servers have all files $W =(W_1,\ldots, W_{\vF})$ and the user retrieves the $K$-th file $W_K$.
}
\label{fig:flow}
\end{center}
\end{figure*}
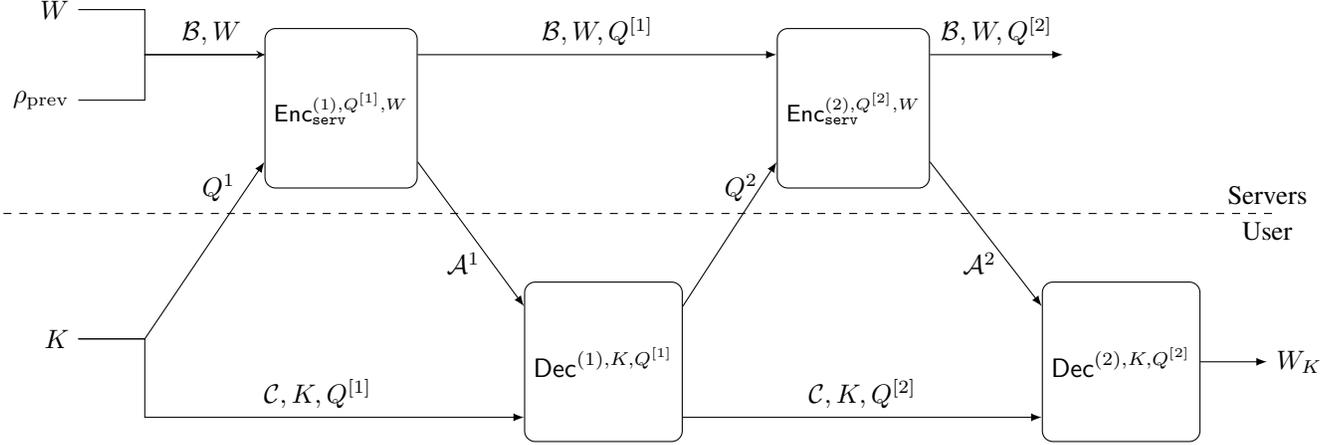

\subsection{Formal Definition of Multi-Round QPIR Protocol} \label{sec:mutli_def}
For any positive integer $\vR$,
we give the formal description of the $\vR$-round QPIR protocol $\Psi^{(\vM,\vR)}_{\mathrm{QPIR}}$.
The information flow of the quantum systems is depicted in Fig.~\ref{fig:flow}.
When $\vR=1$, the protocol description is equivalent to the protocol defined in Section \ref{sec:problem_statement}.

Let $\vN,\vF,\vM$ be integers greater than $1$.
Each of the servers $\mathtt{serv}_1,\ldots,\mathtt{serv}_{\vN}$ contains the whole copy of the uniformly and independently distributed $\vF$ files $W=(W_1,\ldots, W_{\vF}) \in\{0,\ldots, \vM-1\}^{\vF}$.
The $t$-th server $\mathtt{serv}_t$ possesses a quantum system ${\cB}_t$ as local quantum register
and the $\vN$ servers share an entangled state $\rho_{\mathrm{prev}}$ on the quantum system $\cB \coloneqq {\cB}_1\otimes \cdots \otimes {\cB}_{\vN}$.

The user 
chooses the target file index $K\in\{1,\ldots,\vF\}$ uniformly and independently of the files $W_1,\ldots, W_{\vF}$.
The user prepares the query $Q^{1}= (Q_1^{1} , Q_2^{1}, \ldots, Q_{\vN}^{1})$ depending on $K$.
The user has a local quantum register $\mathcal{C}$ where the state is initialized depending on $K$ and $Q^{1}$.

For $i\in\{1,\ldots, \vR\}$, the $i$-th round is described as follows.
Let $Q^{i}_t$ be the query to $\mathtt{serv}_t$ at round $i$, and we denote $Q^i\coloneqq(Q^i_1,\ldots, Q^i_{\vN})$ and $Q_t^{[i]} \coloneqq (Q_t^{1},\ldots,Q_t^{i} )$.
The query $Q^i$ for round $i$ is determined at round $i-1$.
The user sends ${Q}_t^{i}$ to the $t$-th server $\mathtt{serv}_t$.
Depending on ${Q}_t^{[i]}$ and $W$,
each server $\mathtt{serv}_t$ applies a CPTP map $\mathsf{Enc}_{\mathtt{serv}_t}^{(i), Q_t^{[i]},W}$ from $\cB_t$ to $\cA_t^{i}\otimes  \mathcal{B}_t$.
That is, when the collection of the encoders is written as 
\begin{align*}
\mathsf{Enc}_{\mathtt{serv}}^{(i), Q^{[i]},W} \coloneqq \bigotimes_{t=1}^{\vN} \mathsf{Enc}_{\mathtt{serv}_t}^{(i), Q_t^{[i]},W},
\end{align*}
the state $\rho_W^{\mathcal{B}}$ on $\mathcal{B}$
is encoded as
\begin{align*}
\rho_W^{\cA^{i}\mathcal{B}} \coloneqq \mathsf{Enc}_{\mathtt{serv}}^{(i), Q^{[i]},W} (\rho_W^{\mathcal{B}})  , 
\end{align*}
where $\cA^{i} \coloneqq \cA_1^{i}\otimes \cdots\otimes \cA_{\vN}^{i}$.
Each server transmits the system $\mathcal{A}_t^{i}$ to the user and 
the received state of the user is the reduced state
\begin{align}
\rho_W^{\cA^{i}} \coloneqq \Tr_{\mathcal{B}} \rho_W^{\cA^{i}\mathcal{B}}.
\label{def:encodedst2}
\end{align}
If $ i < \vR$, the user applies 
a quantum instrument $\mathsf{Dec}^{(i),K,Q^{[i]}} = \{Y^{i}_{Q^{i+1}}\}_{Q^{i+1}\in \mathcal{Q}^{i+1}}$ from $
\mathcal{A}^{i} \otimes \mathcal{C}$ to ${\mathcal{C}}$ depending on $K$ and $Q^{[i]} \coloneqq (Q^1,\ldots, Q^i)$,
where $\mathcal{Q}_1^{i+1}\times \cdots \times \mathcal{Q}_{\vN}^{i+1}$ is the set of queries at round $i+1$
	and $Q^{i+1}$ is the measurement outcome.
Then round $i$ ends and round $i+1$ starts.
If $i=\vR$, i.e., at the final round, 
	the user applies a POVM $\mathsf{Dec}^{(\vR),K,Q^{[\vR]}} = \{\mathsf{Y}_{M}\}_{M=0}^{\vM}$ on $\cA^{\vR}\otimes \mathcal{C}$
	depending on $K$ and $Q^{[\vR]}$
	and
	outputs the measurement outcome $M\in\{0,\ldots,\vM\}$.
	If $M=\vM$, it is considered as the retrieval failure.

Similarly to Section \ref{sec:problem_statement}, 
the security of the protocol is evaluated by the the error probability, the server secrecy, and the user secrecy defined by
\begin{align*}
P_{\mathrm{err}}(\Psi_{\mathrm{QPIR}}^{(\vM,\vR)}) &\coloneqq 
\Pr_{W,K,Q^1}[ M \neq W_K],
\\
S_{\mathrm{serv}}(\Psi_{\mathrm{QPIR}}^{(\vM,\vR)}) &\coloneqq I( W_{K^c}; \mathfrak{user}(\Psi_{\mathrm{QPIR}}^{(\vM,\vR)}) | K),\\
S_{\mathrm{user}} (\Psi_{\mathrm{QPIR}}^{(\vM,\vR)}) &\coloneqq \max_{t\in\{1,\ldots,\vN\}}I(K ; \mathfrak{serv}_t(\Psi_{\mathrm{QPIR}}^{(\vM,\vR)})).
\end{align*}
Given the QPIR protocol $\Psi_{\mathrm{QPIR}}^{(\vM,\vR)}$, we define the upload cost, the download cost, and the QPIR rate by  
\begin{align}
U(\Psi_{\mathrm{QPIR}}^{(\vM,\vR)})&\coloneqq \sum_{i=1}^{\vR} \log |\mathcal{Q}^{i}|, \\ 
D(\Psi_{\mathrm{QPIR}}^{(\vM,\vR)})&\coloneqq \sum_{i=1}^{\vR} \log \dim\cA^{i},  \\
R(\Psi_{\mathrm{QPIR}}^{(\vM,\vR)})&\coloneqq\frac{\log \vM}{D(\Psi_{\mathrm{QPIR}}^{(\vM,\vR)})}.
\end{align}

Now, we define the $\vR$-round QPIR capacities with four parameters as follows.
For an error constraint $\alpha\in [0,1)$, 
server secrecy constraint $\beta\in[0,\infty]$,
user secrecy constraint $\gamma\in[0,\infty]$,
and upload constraint $\theta\in[0,\infty]$, 
the {\em asymptotic security-constrained $\vR$-round capacity} and the {\em exact security-constrained $\vR$-round capacity} are defined as
\begin{align*}
C_{\mathrm{asymp}}^{\alpha,\beta,\gamma,\theta} 
				& \coloneqq \sup_{\eqref{con3}} 
				\liminf_{\ell\to\infty} R(\Psi_{\mathrm{QPIR}}^{(\vM_\ell,\vR)}),\\
C_{\mathrm{exact}}^{\alpha,\beta,\gamma,\theta} 
				& \coloneqq \sup_{\eqref{con4}}
				\liminf_{\ell\to\infty} R(\Psi_{\mathrm{QPIR}}^{(\vM_\ell,\vR)}),
\end{align*}
where the supremum is taken for sequences $\{\vM_\ell\}_{\ell=1}^{\infty}$ such that $\lim_{\ell\to\infty} \vM_\ell = \infty$
and for sequences $\{\Psi_{\mathrm{QPIR}}^{(\vM_\ell,\vR)}\}_{\ell=1}^{\infty}$ of $\vR$-round QPIR protocols
to satisfy either \eqref{con1} or \eqref{con2} given by
\begin{align}    \label{con3} 
\begin{split} 
\!\!\!\!\limsup_{\ell\to\infty} P_{\mathrm{err}}(\Psi_{\mathrm{QPIR}}^{(\vM_\ell,\vR)}) \leq \alpha,  \enskip
& \limsup_{\ell\to\infty} S_{\mathrm{serv}}(\Psi_{\mathrm{QPIR}}^{(\vM_\ell,\vR)}) \leq \beta,    \\
\!\!\!\!\limsup_{\ell\to\infty} S_{\mathrm{user}}(\Psi_{\mathrm{QPIR}}^{(\vM_\ell,\vR)}) \leq \gamma, \enskip
& \limsup_{\ell\to\infty} \frac{U(\Psi_{\mathrm{QPIR}}^{(\vM_\ell,\vR)})}{D(\Psi_{\mathrm{QPIR}}^{(\vM_\ell)})} \leq  \theta,
\end{split}
\end{align}
and 
\begin{gather}  \label{con4} 
\begin{split}
 P_{\mathrm{err}}(\Psi_{\mathrm{QPIR}}^{(\vM_\ell,\vR)}) \leq \alpha, \enskip
& S_{\mathrm{serv}}(\Psi_{\mathrm{QPIR}}^{(\vM_\ell,\vR)}) \leq \beta,   \\
 S_{\mathrm{user}}(\Psi_{\mathrm{QPIR}}^{(\vM_\ell,\vR)}) \leq \gamma, \enskip
& \limsup_{\ell\to\infty} \frac{U(\Psi_{\mathrm{QPIR}}^{(\vM_\ell,\vR)})}{D(\Psi_{\mathrm{QPIR}}^{(\vM_\ell,\vR)})} \leq  \theta
.
\end{split}
\end{gather}

The multi-round QPIR capacity is derived as follows.
\begin{theo}
[Multi-round QPIR capacity]
Let $\vR$ be any positive integer.
When servers can share prior entanglement,
	the $\vR$-round QPIR capacity for $\vF\geq 2$ files and $\vN\geq 2$ servers is
\begin{align}
C_{\mathrm{exact}}^{0,\beta,\gamma,\theta,\vR}
= C_{\mathrm{asymp}}^{0,\beta,\gamma,\theta,\vR} =1     \label{1111}
\end{align}
for any $\beta,\gamma,\theta\in[0,\infty]$.
\end{theo}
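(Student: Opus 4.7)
The plan is to prove achievability by reducing to the single-round case, and to prove the converse by combining Fano's inequality with a Holevo-type bound adapted to the multi-round setting. For achievability, the rate-one protocol of Section~\ref{sec:protocol} is a one-round QPIR protocol with zero error and perfect server/user secrecy; viewing it as a $\vR$-round QPIR protocol whose rounds $2,\ldots,\vR$ are trivial (empty queries and one-dimensional $\cA^i$) immediately gives $C_{\mathrm{exact}}^{0,0,0,0,\vR}\geq 1$, and the natural $\vR$-round analog of the chain \eqref{ineq:capacities} then extends the lower bound to all $\beta,\gamma,\theta\in[0,\infty]$.

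For the converse, it suffices to establish $C_{\mathrm{asymp}}^{0,\infty,\infty,\infty,\vR}\leq 1$, i.e., the weak converse, since only the case $\alpha=0$ appears in \eqref{1111}. Using $H(W_K\mid K)=\log\vM$ from the uniformity of the files, Fano's inequality yields
\begin{align*}
(1-P_{\mathrm{err}}(\Psi_{\mathrm{QPIR}}^{(\vM,\vR)}))\log\vM \leq h\bigl(P_{\mathrm{err}}(\Psi_{\mathrm{QPIR}}^{(\vM,\vR)})\bigr) + I(W_K;M\mid K).
\end{align*}
By data processing, $I(W_K;M\mid K)\leq I(W;M\mid K)$. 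The crucial step is to upper bound $I(W;M\mid K)$ by the total download cost $D(\Psi_{\mathrm{QPIR}}^{(\vM,\vR)})=\sum_{i=1}^{\vR}\log\dim\cA^i$, exploiting the fact that the user's local register $\mathcal{C}$ is initialized without any correlation with the servers' systems so that all information about $W$ reaches the user only through the downloaded systems $\cA^1,\ldots,\cA^{\vR}$. Combined with $h(P_{\mathrm{err}})\leq 1$, dividing by $D$ and taking $\ell\to\infty$ with $P_{\mathrm{err}}\to 0$ and $D\to\infty$ yields $\liminf_{\ell\to\infty} R(\Psi_{\mathrm{QPIR}}^{(\vM_\ell,\vR)})\leq 1$.

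The main obstacle is establishing the Holevo bound $I(W;M\mid K)\leq D(\Psi_{\mathrm{QPIR}}^{(\vM,\vR)})$ in the adaptive multi-round setting, because at round $i\geq 2$ the classical query $Q^i$ is generated by measurements of $\cA^{[i-1]}$ and therefore depends on $W$, so the joint state on $\cA^{[\vR]}$ is not a plain classical-quantum ensemble with fixed queries. I would handle this by induction on $\vR$: let $\chi^{(i)}\coloneqq I(W;\text{user's total classical and quantum information at end of round }i\mid K)$ with $\chi^{(0)}=0$ since the user starts uncorrelated with the servers, and apply the chain rule together with the Holevo bound on the freshly received system $\cA^i$ (conditioned on the user's prior state and classical history) to show $\chi^{(i)}-\chi^{(i-1)}\leq\log\dim\cA^i$. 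Iterating gives the desired bound. An alternative route is to purify all user operations (including query generation via coherent isometries), reducing the protocol to a non-adaptive preparation of a cq ensemble on $\cA^{[\vR]}$, and then apply a single Holevo bound, exploiting that the user's initial ancillas are uncorrelated with the servers.
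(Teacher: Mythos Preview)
Your overall strategy matches the paper's: achievability via the one-round rate-one protocol (the paper phrases it as ``applying the protocol of Section~\ref{sec:protocol} repetitively $\vR$ times'', but your padding-with-trivial-rounds variant is equivalent), and converse via Fano's inequality followed by a Holevo-type bound on the accumulated mutual information. The paper casts the converse as an instance of classical-quantum channel coding with classical feedback and invokes \cite[Theorem~4]{DQSW19} (stated here as Proposition~\ref{prop:cqfeed}).

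However, your proposed inductive step $\chi^{(i)}-\chi^{(i-1)}\leq\log\dim\cA^i$ is \emph{false} in general, and this is the heart of the multi-round difficulty. The problem is that at round $i$ the user's register $\mathcal{C}^i$ can be entangled with the servers' register $\cB^{i-1}$ through the previously downloaded systems, so $H(\cA^i\mid W,\mathcal{C}^i,Q^{[i]},K)$ can be negative. Concretely: suppose at round~$1$ the servers send the user one half of a $d\times d$ maximally entangled pair (no dependence on $W$), so $\chi^{(1)}=0$; at round~$2$ they apply a $W$-dependent Weyl operator to the other half and send it. Then $\chi^{(2)}=2\log d$ while $\log\dim\cA^2=\log d$, so $\chi^{(2)}-\chi^{(1)}=2\log d>\log\dim\cA^2$. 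The total bound $\chi^{(\vR)}\leq\sum_i\log\dim\cA^i$ still holds, but not round by round in the form you wrote.

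The fix, which is exactly what the paper does via Lemmas~\ref{qff0} and~\ref{qff} (taken from \cite{DQSW19}), is to strengthen the induction hypothesis by carrying an extra ``potential'' term: one tracks
\[
I(W;\cA^{i}\mathcal{C}^{i}Q^{[i]})+H(\cA^{i}\mathcal{C}^{i}\mid W,Q^{[i]})
\]
rather than $I(W;\cA^{i}\mathcal{C}^{i}Q^{[i]})$ alone. The conditional entropy term is nonnegative (one purifies so that the joint states are pure conditioned on the classical variables) and absorbs the negative contribution that super-dense-coding-type strategies would otherwise produce. Lemma~\ref{qff0} shows this combined quantity does not increase under the user's one-way LOCC step, and Lemma~\ref{qff} shows it increases by at most $H(\cA^i)\leq\log\dim\cA^i$ when $\cA^i$ is appended. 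Telescoping gives \eqref{eq:multi-converse-label}. Your alternative route (coherently purifying the user's instruments and applying a single Holevo bound) runs into the same obstruction: after purification the queries are quantum and the servers' encoders at round $i$ are controlled by registers entangled with the user's side, so the state on $\cA^{[\vR]}$ is not a simple cq ensemble indexed by $W$ and one cannot directly invoke $I(W;\cA^{[\vR]}\mathcal{C})\leq\log\dim\cA^{[\vR]}$ without the same bookkeeping.
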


\begin{proof}

Eq. \eqref{1111} is proved by the following inequalities:
\begin{align*}
 1\leq C_{\mathrm{exact}}^{0,0,0,0,\vR} 
 \leq C_{\mathrm{exact}}^{0,\beta,\gamma,\theta,\vR}
 \leq C_{\mathrm{asymp}}^{0,\beta,\gamma,\theta,\vR}
 \leq C_{\mathrm{asymp}}^{0,\infty,\infty,\infty,\vR} \leq  1.
\end{align*}
The first inequality holds by applying the rate-one QPIR protocol in Section \ref{sec:protocol} repetitively $\vR$ times.
The second, third, and fourth inequalities follow from the definition of the capacities.
The last inequality is proved in Section \ref{subsec:weak}.
Therefore, we obtain the theorem.
\end{proof}

\subsection{Weak converse bound of multi-round QPIR capacity} \label{subsec:weak}

We prove the converse bound
\begin{align}
C_{\mathrm{asymp}}^{0,\infty,\infty,\infty,\vR} \leq  1. \label{weakconverse}
\end{align}
Our proof comes from the fact that 
	the multi-round QPIR protocol can be considered as a case of the Classical-Quantum (CQ) channel coding with classical feedback \cite{DQSW19}.
	In the CQ channel coding with classical feedback,
		the sender encodes a classical message $W$ as a quantum state 
		and sends the state over a fixed channel $\mathcal{N}$.
	The receiver performs a decoding measurement on the received state and returns the measurement outcome to the sender. 
	The sender and the receiver iterate this process $\vR$ times while using the previous measurement outcomes for encoding and decoding. 
	At the end of the protocol, the receiver receives the classical message $W$.
	The paper \cite{DQSW19} proved that the capacity of this problem when the sender and the receiver have their local quantum registers, respectively.
	More specifically, the paper \cite{DQSW19} also considered the energy constraint $E$ that 
	for a given Hamiltonian $\mathsf{H}$ on the input system of $\mathcal{N}$, 
	the input states $\rho_1,\ldots, \rho_{\vR}$ to $\mathcal{N}$ should satisfy $\sum_{i=1}^{\vR} \Tr \rho_i \mathsf{H} \le E$.
	The CQ channel capacity is characterized by the following proposition.
	\begin{prop}[{\cite[Theorem 4]{DQSW19}}]
	Let $\mathcal{N}$ be a quantum channel,
		$\vR$ be the number of communication rounds,
		$\vM$ be the size of the message set,
		$\mathsf{H}$ be the Hamiltonian on the input system of $\mathcal{N}$,
		$E$ be the energy constraint,
		and $\varepsilon$ be the error probability.
	Suppose the sender and the receiver have local quantum registers, respectively.
	Then, for the CQ channel coding with classical feedback and energy constraint,
		we have the following inequality:
		\begin{align}
		(1-\varepsilon) \log \vM &\le \sup_{\rho: \Tr \rho \mathsf{H} \le E} \vR H(\mathcal{N}(\rho)) + h_2(\varepsilon) .
		\end{align}
	\end{prop}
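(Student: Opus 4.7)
The plan is to combine Fano's inequality with the chain rule for classical--quantum mutual information and the Holevo output-entropy bound.

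First, since $W$ is uniform over a set of size $\vM$ and the decoded estimate $\hat{W}$ satisfies $\Pr[\hat{W}\neq W]\leq\varepsilon$, Fano's inequality gives
$$(1-\varepsilon)\log\vM - h_2(\varepsilon) \leq I(W;\hat{W}).$$
By the quantum data-processing inequality applied to the final decoding measurement (and all internal processing by the receiver), $I(W;\hat{W})$ is upper-bounded by the classical--quantum mutual information $I(W;\cB_{\mathrm{tot}})$, where $\cB_{\mathrm{tot}}$ collects the $\vR$ channel outputs received by the receiver together with the receiver's local quantum register.

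Next, I would expand $I(W;\cB_{\mathrm{tot}})$ using the chain rule, conditioning on the classical feedback symbols $Y^{i-1}$ returned to the sender before round $i$:
$$I(W;\cB_{\mathrm{tot}}) \leq \sum_{i=1}^{\vR} I(W;B_i\mid Y^{i-1}),$$
where $B_i$ denotes the $i$-th channel output. For each fixed realization of $Y^{i-1}$, the $i$-th summand is the Holevo information of an ensemble of channel outputs $\{(p(w\mid y^{i-1}),\mathcal{N}(\rho_i^{w,y^{i-1}}))\}_w$, and is therefore upper-bounded by the single-letter output entropy $H(\mathcal{N}(\rho_i^{y^{i-1}}))$ of the averaged input $\rho_i^{y^{i-1}} \coloneqq \sum_w p(w\mid y^{i-1})\rho_i^{w,y^{i-1}}$. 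Averaging over $Y^{i-1}$, using concavity of the von Neumann entropy, and using the energy constraint to control the admissible inputs then yields $\sum_{i=1}^{\vR} \mathbb{E}_{Y^{i-1}}[H(\mathcal{N}(\rho_i^{Y^{i-1}}))] \leq \vR \sup_{\rho:\,\Tr\rho\mathsf{H}\leq E} H(\mathcal{N}(\rho))$. Combining with the Fano bound produces the claimed inequality.

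The main obstacle I foresee is the feedback/chain-rule step. In a purely classical feedback setting, Shannon's theorem guarantees that feedback does not increase capacity, but here the sender and receiver both maintain quantum registers that accumulate coherence across the $\vR$ rounds. Verifying that the adaptive encoding $\rho_i^{w,y^{i-1}}$ and the Markov structure $W \to Y^{i-1} \to \rho_i$ at the level of classical--quantum states still reduces each round's contribution to a single-letter Holevo bound on the output entropy, despite the quantum memories at both ends, is the technical heart of \cite[Theorem~4]{DQSW19}.
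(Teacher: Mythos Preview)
Your outline up through Fano's inequality and data processing is correct and matches the paper.  The gap is precisely the chain-rule step you flag: the decomposition
\[
I(W;\cB_{\mathrm{tot}}) \leq \sum_{i=1}^{\vR} I(W;B_i\mid Y^{i-1})
\]
is not valid in general when both parties hold quantum registers that retain coherence across rounds.  The natural chain rule conditions on the \emph{quantum} outputs $B_1,\ldots,B_{i-1}$, not on the classical feedback $Y^{i-1}$, and replacing one by the other is exactly where quantum memory can beat the single-letter Holevo bound if one is not careful.  So the proposal identifies the right obstacle but does not surmount it.

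The paper (Appendix~\ref{append:cqfeed}, following \cite{DQSW19}) does \emph{not} attempt a mutual-information chain rule.  Instead it tracks the augmented potential
\[
\Phi_i \;=\; I(W;\cA^{i}\mathcal{C}^{i}Q^{[i]}) \;+\; H(\cA^{i}\mathcal{C}^{i}\mid W\,Q^{[i]})
\]
and shows two things: (i) the receiver's one-way LOCC plus classical feedback cannot increase $\Phi$ (Lemma~\ref{qff0}), and (ii) the sender's encoding increases $\Phi$ by at most the output entropy $H(\cA^{i})$ (Lemma~\ref{qff}).  Iterating gives $I(W;\cA^{\vR}\mathcal{C}^{\vR}Q^{[\vR]}) \le \Phi_{\vR} \le \sum_i H(\cA^i)$, after which concavity and the energy constraint finish as you describe.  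The extra conditional-entropy term in $\Phi_i$ is the device that absorbs the quantum-memory contributions your naive chain rule cannot control; without it the per-round reduction to a single-letter output entropy does not go through.
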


	The multi-round QPIR protocol can be considered as a case of this problem where the channel $\mathcal{N}$ is the identity channel and there is no energy constraint.
	To see this fact, we 
		consider the the collection of the servers as the sender and the user as the receiver of the CQ channel coding,
		and focus on the communication of a classical message from the collection of the servers to the user.
	The servers sends to the user the systems $\cA^{i}$ over the identity channel and the user sends queries $Q^{i}$ to the servers as the measurement outcome on $\cA^{i}$.
	The servers and the user have $\cB$ and $\mathcal{C}$ as local quantum registers, respectively.
	At the end of the protocol, the user obtains the classical target file $W_K$.
	Therefore, we can consider the multi-round QPIR protocol as a CQ channel coding with classical feedback.
	
	By the similar proof of \cite[Theorem 4]{DQSW19}, we have the following proposition.
\begin{prop} \label{prop:cqfeed}
Consider the CQ channel coding of a classical message $W\in\{0,\ldots,\vM-1\}$ from the sender to the receiver by sending quantum systems $\cA^{1},\ldots,\cA^{\vR}$ sequentially over the identity channel and assisted by classical feedback.
We assume that the sender and the receiver have local quantum registers, respectively.
Let $\rho_W^{\cA^{i}}$ be the state on $\cA^{i}$.
For the uniformly chosen message $W$ and the decoding output $M$,
we define the error probability $\varepsilon \coloneqq \Pr[M \neq W]$.
Then we have the following inequality 
\begin{align}
(1-\varepsilon) \log \vM &\le \sum_{i=1}^{\vR} H(\rho_W^{\cA^{i}}) + h_2(\varepsilon) 
\label{eq:multi-converse-label0}
\\
    &\le \sum_{i=1}^{\vR} \log \dim \cA^{i} + h_2(\varepsilon),
\label{eq:multi-converse-label}
\end{align}
where $h_2(\cdot)$ is the binary entropy function. 
\end{prop}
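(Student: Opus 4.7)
The plan is to combine Fano's inequality with a round-by-round entropy chain, paralleling the converse of \cite[Theorem~4]{DQSW19} specialized to the identity channel with no energy constraint. First, since $W$ is uniform on $\{0,\ldots,\vM-1\}$ and $\Pr[M\neq W] = \varepsilon$, Fano's inequality yields $(1-\varepsilon)\log \vM \le I(W;M) + h_2(\varepsilon)$. Since the decoded $M$ is the outcome of a measurement on the receiver's register $\mathcal{C}$ together with the received system $\cA^{\vR}$, quantum data processing gives $I(W;M) \le I(W;\cA^{\vR}\mathcal{C})$.

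Next, I would peel off the rounds inductively. Denote by $\mathcal{C}^{(i)}$ the state of the receiver's register at the end of round $i$, with $\mathcal{C}^{(0)}$ being its initialization from $K,Q^1$ and thus independent of $W$. At each round $i$, the update $\mathcal{C}^{(i-1)}\mapsto\mathcal{C}^{(i)}$ is produced by a CPTP map acting on $\cA^i\otimes\mathcal{C}^{(i-1)}$, viewing the quantum instrument as a CPTP map that absorbs the classical outcome $Q^{i+1}$ into $\mathcal{C}^{(i)}$. Data processing and the chain rule of quantum mutual information give
\begin{align}
I(W;\mathcal{C}^{(i)}) \le I(W;\cA^i\mathcal{C}^{(i-1)}) = I(W;\mathcal{C}^{(i-1)}) + I(W;\cA^i|\mathcal{C}^{(i-1)}).
\end{align}
Iterating from $I(W;\mathcal{C}^{(0)})=0$ yields $I(W;M) \le \sum_{i=1}^{\vR} I(W;\cA^i|\mathcal{C}^{(i-1)})$.

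The main obstacle is the per-round bound $I(W;\cA^i|\mathcal{C}^{(i-1)}) \le H(\rho_W^{\cA^i})$. Expanding as $H(\cA^i|\mathcal{C}^{(i-1)}) - H(\cA^i|W\mathcal{C}^{(i-1)})$, the second term can be negative because earlier rounds of quantum communication may entangle the sender's register $\cB$ with $\mathcal{C}^{(i-1)}$, and hence $\cA^i$ (produced by encoding $\cB$ at round $i$) with $\mathcal{C}^{(i-1)}$ even after conditioning on $W$. The resolution---which is precisely the content of the \cite{DQSW19} converse---is to exploit that the classical feedback history $Q^{[i]}$ is recorded inside $\mathcal{C}^{(i-1)}$: conditioning on $Q^{[i]}$ reduces this step to a Holevo-type bound on the classical-quantum ensemble $\{p(w|Q^{[i]}),\sigma_{w,Q^{[i]}}^{\cA^i}\}$, whose averaged state has entropy $H(\rho_W^{\cA^i})$. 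Rather than reprove this lemma, I would invoke \cite[Theorem~4]{DQSW19} specialized to the identity channel. The second stated inequality $H(\rho_W^{\cA^i}) \le \log\dim\cA^i$ is then immediate from the maximum-entropy bound, and combining the three bounds yields the proposition.
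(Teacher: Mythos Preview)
Your start (Fano plus data processing to $I(W;\cA^{\vR}\mathcal{C}^{\vR})$) matches the paper. However, the chain you set up on $I(W;\mathcal{C}^{(i)})$ alone does not close, and your proposed fix---conditioning on the feedback history $Q^{[i]}$ to reduce to a Holevo-type bound---does not actually work: even after conditioning on $W$ and $Q^{[i]}$, the transmitted system $\cA^i$ can be entangled with the receiver's quantum register $\tilde{\mathcal{C}}^{(i-1)}$ (via earlier rounds of quantum transmission), so $H(\cA^i\mid W,\mathcal{C}^{(i-1)})$ may be negative and $I(W;\cA^i\mid \mathcal{C}^{(i-1)})$ can exceed $H(\rho_W^{\cA^i})$. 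You then resolve this only by invoking \cite[Theorem~4]{DQSW19}, which is essentially the statement being proved; that is valid as a citation but not a self-contained argument.

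The paper's proof differs precisely at this point. Instead of tracking $I(W;\mathcal{C}^{(i)})$, it tracks the \emph{augmented} quantity
\[
I(W;\cA^{i}\mathcal{C}^{i}Q^{[i]}) \;+\; H(\cA^{i}\mathcal{C}^{i}\mid W\,Q^{[i]}),
\]
where the extra conditional-entropy term serves as a potential that absorbs the negative-entropy contribution from entanglement. Two lemmas of \cite{DQSW19} are applied: one shows this combined quantity is non-increasing under the receiver's one-way LOCC step (the instrument producing $Q^{i+1}$), and the other shows the sender's encoding step raises it by at most $H(\cA^{i})$. Telescoping, dropping the final nonnegative conditional entropy, and noting the initial term vanishes (the receiver's initial register is pure and independent of $W$) yields $\sum_i H(\cA^i)$. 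Your per-round terms $I(W;\cA^i\mid\mathcal{C}^{(i-1)})$ cannot be bounded individually by $H(\cA^i)$, but this coupled $I+H$ chain can---that is the missing idea.
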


For the completeness of our paper, we give a proof of Proposition~\ref{prop:cqfeed} in Appendix~\ref{append:cqfeed}.

\begin{remark}
Proposition \ref{prop:cqfeed} is slightly different from \cite[Theorem 4]{DQSW19}.
First, whereas \cite[Theorem 4]{DQSW19} considers an energy constraint on the quantum channel,  
Proposition \ref{prop:cqfeed} assumes no energy constraint.
Second, whereas \cite[Theorem 4]{DQSW19} considers the repetitive uses of a fixed quantum channel $\mathcal{N}$,
Proposition \ref{prop:cqfeed} considers each use of the identity quantum channels over $\cA^{1}, \ldots, \cA^{\vR}$.
Even with these differences, 
	we can apply the same proof steps of \cite[Theorem 4]{DQSW19}
	and the first inequality of \cite[Eq.~(35)]{DQSW19} is the inequality \eqref{eq:multi-converse-label0}.
\end{remark}

Now we prove the weak converse bound. 
We choose an arbitrary sequence $\{\Psi_{\mathrm{QPIR}}^{(\vM_\ell,\vR)}\}_{\ell=1}^{\infty}$ of $\vR$-round QPIR protocols
	to satisfy $\varepsilon_\ell \coloneqq P_{\mathrm{err}}(\Psi_{\mathrm{QPIR}}^{(\vM_\ell,\vR)}) \to 0$ as $\ell\to\infty$.
	Considering the collection of the $\vN$ servers as the sender and the user as the receiver of Proposition \ref{prop:cqfeed}, 
	we can apply Proposition \ref{prop:cqfeed} to the $\vR$-round QPIR protocol $\Psi_{\mathrm{QPIR}}^{(\vM_\ell,\vR)}$ 
	with the classical message $W_K \in \{0,\ldots,\vM_\ell-1\}$, the transmitted quantum systems $\cA^{1},\ldots,\cA^{\vR}$, and the classical feedbacks $Q^{1}, \ldots, Q^{\vR}$.
	In this case, $\varepsilon$ and $\vM$ of Proposition \ref{prop:cqfeed} is substituted by $\varepsilon_{\ell}$ and $\vM_{\ell}$, i.e., Eq.~\eqref{eq:multi-converse-label} is written as
\begin{align}
(1-\varepsilon_\ell) \log \vM_\ell &\le \sum_{i=1}^{\vR} \log \dim \cA^{i} + h_2(\varepsilon_\ell).
\end{align}
Therefore, we have
\begin{align}
\lim_{\ell\to\infty} R(\Psi_{\mathrm{QPIR}}^{(\vM_\ell,\vR)})= \lim_{\ell\to\infty} \frac{\log \vM_\ell}{\sum_{i=1}^{\vR} \log \dim \cA^{i}} \le 1,
\end{align}
which implies \eqref{weakconverse}.


\section{Conclusion} \label{sec:conclusion}

We have studied the capacity of QPIR with multiple servers when the servers share prior entanglement.
Considering the user secrecy and the server secrecy,
	we defined two kinds of QPIR capacities: asymptotic and exact security-constrained capacities with upload constraint. 
We proved that both QPIR capacities are $1$ for any security constraints and any upload constraint.
We have constructed a capacity-achieving rate-one protocol only with two servers when the file size is the square of an arbitrary integer.
The converse has been proved by focusing on the download step of QPIR protocols.
{Furthermore, we have proved that the capacity of multi-round QPIR is also $1$ by the weak converse bound.}

It is an interesting open question whether
the QPIR without shared entanglement also has an advantage over the classical PIR counterparts.
This paper has considered the QPIR under the assumption of the prior entanglement and 
	the quantum capacity of this case is strictly greater than the classical PIR capacity.
The QPIR capacity without shared entanglement lies between the QPIR capacity of this paper and the classical PIR capacity.
Therefore, it should be studied whether the quantum PIR capacity is strictly higher than the classical PIR capacity even without shared entanglement.

In this paper, we have assumed that the maximally entangled state can be shared by several servers.
That is, we have made no restriction for shared entanglement.
This setting is similar to 
the original studies \cite{bennett1999entanglement,bennett2002entanglement} for the entangled assisted classical capacity for a noisy quantum channel
because they have no restriction for shared entanglement.
The recent paper \cite{ISIT2020} derived the entanglement-assisted classical capacity for a noisy quantum channel when shared entanglement is limited.
For the extension, the study \cite{ISIT2020} 
invented several new methods, which are essential for this restriction.
Therefore, it is remained as a future problem to extend our result to 
the case when the shared entangled state is restricted.

As another problem, we can consider the QPIR capacity when
the channel from servers to user are noisy quantum channels.
It is natural to apply quantum error corrections to each noisy quantum channels
and apply our QPIR protocol over the virtually implemented noiseless channels by error correction.
In this case, the transmission rate is given by the quantum capacity of the noisy quantum channel.
For the converse part, we can easily extend the discussion of Section \ref{sec:converse}.
In this extension, the obtained upper bound of the transmission rate is 
the classical capacity of the noisy quantum channel.
Hence, this simple method does not yield the QPIR capacity with noisy quantum channels.
Therefore, it is another challenging problem to calculate the QPIR capacity with noisy quantum channels.

%




\appendices

\section{Preliminaries on Quantum Information Theory} \label{append:qttheory}

In this section, we briefly introduce the fundamental framework of quantum information theory.
For more detail, see \cite{Hay17, Wilde, Tom}.

In classical information theory, 
the information is defined by an element $x$ of a finite set $\mathcal{X}$, and the information $x$ is changed by a function $f:\mathcal{X}\to\mathcal{Y}$, where $\mathcal{Y}$ is a finite set.
Similarly, in quantum information theory,
the quantum information is defined by a quantum state $\rho$ on a quantum system $\mathcal{A}$, and the quantum states $\rho$ on $\mathcal{A}$ is modified by quantum operations $\kappa$ from the states on $\mathcal{A}$ to the states on a quantum system $\mathcal{B}$.
Another difference between the two information theories is the measurement of information.
If there is no error on the measuring apparatus, 
the measurement of classical information $x$ is deterministic and does not change the information, 
i.e., measurement outcome is $x$ and the information $x$ is not changed after the measurement.
However, the measurement of a quantum state outputs its outcome probabilistically and changes the state.
In the following, we define the quantum system, quantum state, quantum operation, and quantum measurement.

A quantum system is defined by a finite-dimensional Hilbert space $\mathcal{A}$.
A vector $x$ in $\mathcal{A}$ is denoted by $|x\rangle$, and $\bar{x}^{\top}$ is denoted by $\langle x|$.
A quantum state is defined by a {\em density matrix} which is a positive semidefinite matrix $\rho$ on $\mathcal{A}$ such that $\Tr \rho = 1$.
We denote the set of density matrices on $\mathcal{A}$ by $\mathcal{S}(\mathcal{A})$.
When a state $\rho$ is rank-one, i.e., $\rho = |\psi\rangle\langle\psi|$ for some unit vector $|\psi\rangle$,  
the state is called a {\em pure state} and is identified with the vector $|\psi\rangle \in \mathcal{A}$.
If a state $\rho$ is not a pure state, it is called a {\em mixed state}.
The composite system of $\mathcal{A}$ and $\mathcal{B}$ is defined by $\mathcal{A}\otimes \mathcal{B}$.
For any quantum state $\rho\in\mathcal{S}(\mathcal{A}\otimes \mathcal{B})$,
the {\em reduced state} on $\mathcal{A}$ is described by $\Tr_\mathcal{B} \rho$, where $\Tr_\mathcal{B}$ is the partial trace on the system $\mathcal{B}$.
A state $\rho\in\mathcal{S}(\mathcal{A}\otimes \mathcal{B})$ is called a {\em separable state} if 
$\rho$ is written as $\rho = \sum_i p_i \sigma_i \otimes \tau_i$ for states $\sigma_i\in \mathcal{S}(\mathcal{A})$, $\tau_i\in\mathcal{S}(\mathcal{B})$, and a probability distribution $\{p_i\}_i$.
A state $\rho\in\mathcal{S}(\mathcal{A}\otimes \mathcal{B})$ is called an {\em entangled state} if $\rho$ is not separable.

A quantum operation is defined by a {\em Completely Positive Trace-Preserving (CPTP) linear map}
from $\mathcal{S}(\mathcal{A})$ to $\mathcal{S}(\mathcal{B})$.
A linear map $\kappa$ is called a {\em positive map} 
if $\kappa$ maps a positive semidefinite matrix to a positive semidefinite matrix,
and
is called a {\em Completely-Positive (CP) map} 
if the linear map $\kappa\otimes \iota_{\mathbb{C}^n}$ is a positive map for any positive integer $n$, where $\iota_{\mathbb{C}^n}$ is the identity map on $\mathcal{S}(\mathbb{C}^n)$.
An example of quantum operations is 
$\kappa_{\mathsf{U}}(\rho) \coloneqq \mathsf{U}\rho \mathsf{U}^{\dagger}$ for a unitary matrix $\mathsf{U}$ on $\mathcal{A}$.
By the operation $\kappa_{\mathsf{U}}$, a pure state $|\psi\rangle\in\mathcal{A}$ is mapped to the pure state $\mathsf{U} |\psi\rangle\in\mathcal{A}$.


A quantum measurement is defined by an {\em instrument}.
A set $\{\kappa_\omega\}_{\omega\in\Omega}$ of CP maps from $\mathcal{S}(\mathcal{A})$ to $\mathcal{S}(\mathcal{B})$ is called an {\em instrument} if 
for any quantum state $\rho\in\mathcal{S}(\mathcal{A})$, 
$$\sum_{\omega\in\Omega} \Tr \kappa_\omega (\rho) = 1.$$
With probability $\Tr \kappa_\omega (\rho)$, the measurement outcome is $\omega$ and the state after the measurement is 
${\kappa_\omega (\rho)}/{\Tr\kappa_\omega (\rho)}.$
When 
one is interested only in 
the measurement probability and the outcome, 
the measurement is described by a {\em Positive Operator-Valued Measure (POVM)}.
A set $\{M_\omega\}_{\omega\in\Omega}$ of positive semidefinite matrices is called an {\em POVM} if $\sum_{\omega\in\Omega} M_\omega = I$.
With probability $\Tr \rho M_\omega$, the measurement outcome is $\omega$.

\section{Quantum Information Measures} \label{append:entropy}

In this section, we introduce quantum information measures necessary for the analysis of QPIR protocols.

Any quantum state $\rho$ is diagonalized as $\rho = \sum_i p_i |i\rangle\langle i|$ for 
	a probability distribution $\{p_i\}_i$.
For a state $\rho = \sum_i p_i |i\rangle\langle i|$,
{\em von Neumann entropy} is defined by 
\begin{align}
H(\rho) &\coloneqq  H(\{p_i\}), \label{vonEn}
\end{align}
where $H(\cdot)$ in the right-hand side of \eqref{vonEn} is Shannon entropy $H(\{p_i\})\coloneqq -\sum_i p_i\log p_i$.
For any state $\rho\in\mathcal{S}(\mathcal{A} \otimes \mathcal{B})$,
we use the notation 
\begin{gather*}
H(\mathcal{A})_\rho \coloneqq H(\Tr_\mathcal{B} \rho), \quad
H(\mathcal{B})_\rho \coloneqq H(\Tr_\mathcal{A} \rho), \\
H(\mathcal{A},\mathcal{B})_\rho \coloneqq H(\rho).
\end{gather*}

For any state $\rho\in\mathcal{S}(\mathcal{A} \otimes \mathcal{B} \otimes \mathcal{C})$,
the {\em quantum conditional entropy}, {\em quantum mutual information}, and {\em quantum conditional mutual information} are defined as 
\begin{align}
H(\mathcal{A}|\mathcal{B})_\rho  &\coloneqq  H(\mathcal{A},\mathcal{B})_\rho - H(\mathcal{B})_\rho ,\\
I(\mathcal{A};\mathcal{B})_\rho &\coloneqq H(\mathcal{A})_\rho + H(\mathcal{B})_\rho - H(\mathcal{A}, \mathcal{B})_\rho,   \\
I(\mathcal{A};\mathcal{B} |\mathcal{C})_\rho &\coloneqq I(\mathcal{A};\mathcal{B},\mathcal{C})_{\rho}- I(\mathcal{A};\mathcal{C})_{\rho}.
\end{align}

When a quantum state $\rho_X \in \mathcal{S}(\mathcal{A})$ is prepared depending on the random variable $X\in\mathcal{X}$, 
the state on the composite system of $\mathcal{A}$ and $X$ is defined by
$$\tilde{\rho} =\sum_{x\in\mathcal{X}} \Pr[X=x]\cdot  \rho_x \otimes |x\rangle\langle x|.$$
For convenience, we denote 
$H(\cdot)_{\rho_{X}}\coloneqq H(\cdot)_{\tilde{\rho}}$ and
$I(\cdot)_{\rho_{X}}\coloneqq I(\cdot)_{\tilde{\rho}}$.

\section{Derivation of \eqref{eq:feff}} 
	\label{append:avvvv}

For the derivation of \eqref{eq:feff}, we use the data-processing inequality of quantum relative R\'{e}nyi entropy.
When $s\in(0,1)$,
	quantum relative R\'{e}nyi entropy is defined as 
\begin{align}
D_{1+s}(\rho \| \sigma) \coloneqq \frac{1}{s} \Tr \rho^{1+s} \sigma^{-s}
\end{align}
for any states $\rho$ and $\sigma$ such that $\mathrm{supp}(\rho) \subset \mathrm{supp}(\sigma)$,
and 
$D_{1+s}$ satisfies the data-processing inequality with respect to measurements:
\begin{align}
D_{1+s}(\rho \| \sigma) \ge D_{1+s}(P_\rho^{\mathcal{M}} \| P_\sigma^{\mathcal{M}}),
	\label{eq00}
\end{align}
where $P_{\rho}^{\mathcal{M}}$ and $P_{\rho}^{\mathcal{M}}$ are probability distributions after the measurement $\mathcal{M} = \{ M_i \}_i$ on $\rho$ and $\sigma$, respectively, i.e., 
\begin{align*}
P_{\rho}^{\mathcal{M}} = \sum_i  (\Tr\rho M_i)\cdot   |i\rangle \langle i |,  \quad
P_{\sigma}^{\mathcal{M}} = \sum_i  (\Tr\sigma M_i) \cdot |i\rangle \langle i | .
\end{align*}

Next, we prepare the following notations:
	\begin{align*}
	\sigma_{z} &\coloneqq \frac{1}{\vM} \sum_{w=0}^{\vM-1} \rho_{w,z},\\
	\tilde{\rho}_z &\coloneqq \frac{1}{\vM}\sum_{w=0}^{\vM-1} |w\rangle\langle w| \otimes \rho_{w,z},\\
	\tilde{\sigma}_z &\coloneqq \frac{1}{\vM}\sum_{w=0}^{\vM-1} |w\rangle\langle w| \otimes \sigma_{z},\\
	\tilde{\mathsf{Y}} &\coloneqq \sum_{w=0}^{\vM-1} |w\rangle\langle w| \otimes \mathsf{Y}_w,\\
	\mathcal{M} &= \{ \tilde{\mathsf{Y}} , \mathsf{I} - \tilde{\mathsf{Y}} \},
	\end{align*}
where $\{\mathsf{Y}_{w}\}_{w=0}^{\vM}$ is the decoding measurement defined in Section~\ref{sec:problem}.
With these notations, we have
\begin{align}
\Tr \tilde{\rho}_z \tilde{\mathsf{Y}} 
	&=  \frac{1}{\vM} \sum_{w=0}^{\vM-1} \Tr \rho_{w,z} \mathsf{Y}_w 
	= 1 - P_{\mathrm{err},z}(\Psi_{\mathrm{QPIR}}^{(\vM)}),
	\label{eq11}
	\\
\Tr \tilde{\sigma}_z \tilde{\mathsf{Y}} 
	&= \frac{1}{\vM} \sum_{w=0}^{\vM-1} \Tr \sigma_z \mathsf{Y}_w 
	\le \frac{1}{\vM} \Tr \sigma_z \sum_{w=0}^{\vM} \mathsf{Y}_w
	=  \frac{1}{\vM}.
	\label{eq22}
\end{align}

Combining \eqref{eq00}, \eqref{eq11}, and \eqref{eq22},
	we can derive Eq.~\eqref{eq:feff} similarly as \cite[(4.66)]{Hay17}:
\begin{align*}
 &(1-P_{\mathrm{err},z}(\Psi_{\mathrm{QPIR}}^{(\vM)}) )^{1+s} \vM^{s}\\
 &\stackrel{\mathclap{(a)}}{\le} 
	(\Tr \tilde{\rho}_z \tilde{\mathsf{Y}} )^{1+s} 
	(\Tr \tilde{\sigma}_z \tilde{\mathsf{Y}})^{-s}\\
 &\le (\Tr \tilde{\rho}_z \tilde{\mathsf{Y}} )^{1+s} 
	(\Tr \tilde{\sigma}_z \tilde{\mathsf{Y}})^{-s}\\
	&\quad +
	(1-\Tr \tilde{\rho}_z \tilde{\mathsf{Y}} )^{1+s} 
	(1-\Tr \tilde{\sigma}_z \tilde{\mathsf{Y}})^{-s}\\
 &= \exp \paren*{ sD_{1+s}( P_{\tilde{\rho}_z}^{\mathcal{M}} \| P_{\tilde{\sigma}_z}^{\mathcal{M}} ) }\\
 &\stackrel{\mathclap{(b)}}{\le} \exp \paren*{ sD_{1+s}( \tilde{\rho}_z \| \tilde{\sigma}_z ) }\\
 &= \Tr \tilde{\rho}_z^{1+s}  \tilde{\sigma}_z^{-s}
  = \frac{1}{\vM} \sum_{w=0}^{\vM-1} \Tr \rho_{w,z}^{1+s}  \sigma_{z}^{-s},
\end{align*}
where $(a)$ is from \eqref{eq11} and \eqref{eq22}
and $(b)$ is from \eqref{eq00}.

\section{Proof of Proposition~\ref{prop:cqfeed}}	\label{append:cqfeed}

For the proof of Proposition~\ref{prop:cqfeed}, we follow the proof of \cite[Theorem 4]{DQSW19}.
Before the proof, we prepare two lemmas from \cite{DQSW19}.
\begin{lemm}[{\cite[Lemma 2]{DQSW19}}] \label{qff0}
Let $\tau_{WFAB}$ be a classical-quantum state such that 
	\begin{align}
	\tau_{WFAB} = \sum_{w,f} p(w,f) |w,f\rangle \langle w,f| \otimes \tau_{AB|wf},
	\label{afvsdf}
	\end{align}
	where $\tau_{AB|wf}$ are pure states.
Let $\mathcal{M}$ be one-way Local Operations and Classical Communication (LOCC) map from $A\otimes B$ to $A'\otimes B' \otimes X$, {where $X$ is a classical system which is sent from $B$ to $A$.}
Then, we have
\begin{align}
	&I(W; B' F X) + H( B' | W F X) \\
	&\le I(W; B F) + H( B | W F) .
\end{align}
\end{lemm}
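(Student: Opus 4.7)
The plan is to reduce the inequality to a simpler one involving only the first half of the LOCC (the instrument on $B$), and then combine the data processing inequality with the purity hypothesis on $\tau_{AB|wf}$.

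First I would observe that the CPTP map on $A$ in the one-way LOCC (the operation producing $A'$ depending on $X$) does not affect the left-hand side of the claimed inequality, since $I(W;B'FX) + H(B'|WFX)$ depends only on the marginal on $WFB'X$, which is already determined by the instrument on $B$. So I may assume $A'=A$ and only track the instrument that produces classical outcome $X$ and post-measurement register $B'$. Expanding $H(B'FX|W) = H(B'|WFX) + H(FX|W)$ on the left and $H(BF|W) = H(B|WF) + H(F|W)$ on the right, a short calculation gives
\begin{align*}
I(W;B'FX) + H(B'|WFX) &= H(B'|FX) + I(W;F) + I(W;X|F), \\
I(W;BF) + H(B|WF) &= H(B|F) + I(W;F),
\end{align*}
so the claim reduces to $H(B'|FX) + I(W;X|F) \le H(B|F)$.

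The second step combines two standard ingredients. Applying the data processing inequality to the CPTP map $B \to B'X$ (the instrument followed by reading out the classical outcome), conditional on the classical variable $F$, yields
\begin{align*}
I(W;B|F) \ge I(W;B'X|F) = I(W;X|F) + I(W;B'|FX),
\end{align*}
hence $I(W;B'|FX) \le I(W;B|F) - I(W;X|F)$. For the entropy term I would dilate the instrument to a Stinespring isometry $V\colon B \to B' \otimes E \otimes X$, with $X$ the classical outcome register and $E$ the environment. Since $\tau_{AB|wf}$ is pure, the conditional state $\tau_{AB'E|wfx}$ is pure for every outcome $x$, giving $H(A|WFX) = H(B'E|WFX)$; then subadditivity, the fact that conditioning on a classical register does not increase entropy, and the original purity $H(A|WF) = H(B|WF)$ give
\begin{align*}
H(B'|WFX) \le H(B'E|WFX) = H(A|WFX) \le H(A|WF) = H(B|WF).
\end{align*}
Chaining via $H(B'|FX) = H(B'|WFX) + I(W;B'|FX)$ and $H(B|F) = H(B|WF) + I(W;B|F)$,
\begin{align*}
H(B'|FX) \le H(B|WF) + I(W;B|F) - I(W;X|F) = H(B|F) - I(W;X|F),
\end{align*}
which is the reduced inequality.

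The delicate step is the entropy bound: without the purity of $\tau_{AB|wf}$ the inequality $H(B'|WFX) \le H(B|WF)$ can fail, because a measurement of a mixed state can leave $B'$ with larger conditional entropy than $B$ once one conditions on the auxiliary classical $X$. The proof routes around this by going through the purification $A$ and the Stinespring environment $E$, where purity forces the equality $H(A|WFX) = H(B'E|WFX)$. Once this entropy bound is paired with the DPI bound on the information term, the two combine cleanly into the desired statement.
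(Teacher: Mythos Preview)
The paper does not supply its own proof of this lemma; it is quoted from \cite[Lemma~2]{DQSW19} and used as a black box in Appendix~\ref{append:cqfeed}. So there is no in-paper argument to compare your attempt against, and I will comment only on correctness.

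Your reduction of the claim to $H(B'|FX)+I(W;X|F)\le H(B|F)$ is correct, as is the data-processing bound $I(W;B'|FX)\le I(W;B|F)-I(W;X|F)$. The gap is in the entropy step. The inequality you invoke,
\[
H(B'|WFX)\ \le\ H(B'E|WFX),
\]
is not subadditivity (that would be $H(B'E)\le H(B')+H(E)$); what you are using is monotonicity of entropy under discarding a subsystem, which is \emph{false} for quantum systems: for a pure entangled state on $B'E$ one has $H(B'E)=0<H(B')$. Worse, the target $H(B'|WFX)\le H(B|WF)$ that this step is meant to deliver can itself fail for a general instrument. If Bob's operation simply appends a maximally mixed ancilla to $B$ and outputs a deterministic $X$, then $H(B'|WFX)=H(B|WF)+\log(\dim\text{ancilla})$; taking $W$ trivial shows that the full inequality in the lemma, read literally, is violated.

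The repair---implicit in \cite{DQSW19} and made explicit where this paper applies the lemma (the local registers are enlarged so that the global state remains pure throughout)---is that Bob's instrument may be taken to be an isometry $V\colon B\to B'\otimes X_q$ with \emph{no} separate environment $E$, followed by dephasing of $X_q$. Under that assumption the state on $AB'$ conditioned on $w,f,x$ is pure, and your chain collapses to the valid
\[
H(B'|WFX)=H(A|WFX)\le H(A|WF)=H(B|WF),
\]
after which the rest of your argument goes through unchanged. Equivalently: do not trace out $E$; absorb it into $B'$.
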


\begin{lemm}[{\cite[Lemma 3]{DQSW19}}] \label{qff}
Let $\tau_{WFAB}$ be a classical-quantum state defined in \eqref{afvsdf}.
Then 
	\begin{align}
	 &I(W; ABF ) + H(AB | W F )\\ 
	&\le H(A) + I(W; BF) + H(B | W F )     .
	\end{align}
\end{lemm}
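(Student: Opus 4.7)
The plan is to reduce the claimed inequality to the standard subadditivity $H(XY) \le H(X) + H(Y)$ of von Neumann entropy through direct algebraic simplification of both sides.

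First, I would expand the left-hand side using $I(W;ABF) = H(W) + H(ABF) - H(WABF)$ and $H(AB|WF) = H(WABF) - H(WF)$. The $H(WABF)$ terms cancel, so the left-hand side simplifies to $H(W) + H(ABF) - H(WF)$. Next, I would perform the analogous expansion of the right-hand side using $I(W;BF) = H(W) + H(BF) - H(WBF)$ and $H(B|WF) = H(WBF) - H(WF)$. The $H(WBF)$ terms cancel, yielding $H(A) + H(W) + H(BF) - H(WF)$.

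Subtracting these two simplified expressions shows that the inequality is equivalent to $H(ABF) \le H(A) + H(BF)$, which is subadditivity applied to the bipartition $A$ versus $BF$ of the reduced state $\tau_{ABF}$. The claim then follows immediately. I do not anticipate a significant obstacle in executing this plan, since it collapses to a one-line invocation of subadditivity once the definitions are unfolded. In particular, neither the purity of each $\tau_{AB|wf}$ nor the classicality of $W$ and $F$ actually needs to be invoked for this specific inequality, although those hypotheses are likely essential when this lemma is combined with Lemma~\ref{qff0} in the broader converse argument. Were the inequality to turn out tighter than expected from this calculation, the fallback plan would be to revisit the purity assumption and use the identity $H(A)_{\tau_{AB|wf}} = H(B)_{\tau_{AB|wf}}$ for each pure branch, but the elementary reduction above should suffice.
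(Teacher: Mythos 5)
Your proof is correct: expanding $I(W;ABF)+H(AB|WF)=H(W)+H(ABF)-H(WF)$ and $H(A)+I(W;BF)+H(B|WF)=H(A)+H(W)+H(BF)-H(WF)$ reduces the claim exactly to subadditivity $H(ABF)\le H(A)+H(BF)$, which holds for arbitrary states. Note that the paper does not prove this lemma itself but imports it from \cite{DQSW19}; your argument is the standard one (equivalently, one can chain-rule both sides and use $I(W;A|BF)+H(A|WBF)=H(A|BF)\le H(A)$), and you are right that neither the purity of the branches $\tau_{AB|wf}$ nor the classicality of $W$ and $F$ is needed for this particular inequality.
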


For the proof, we formally describe the communication protocol as follows.
We denote the local registers of the sender and the receiver before the communication by $\mathcal{B}^0$ and $\mathcal{C}^0$. Let $\mathcal{A}^{0} = \mathbb{C}$.
At round $i\in\{1,\ldots, \vR\}$, the receiver applies a quantum instrument from $\mathcal{A}^{i-1}\otimes \mathcal{C}^{i-1}$ to $\mathcal{C}^{i}$ depending on $Q^{[i-1]}$ and sends the measurement outcome $Q^i$ to the sender
Then, the sender applies a quantum operation from $\mathcal{B}^{i-1}$ to $\mathcal{A}^i \otimes \mathcal{B}^i$ depending on $W$ and $Q^{[i]}\coloneqq (Q^{1},\ldots, Q^{i})$, and sends $\mathcal{A}^i$ to the receiver.
After the final $\vR$th-round, the sender applies a POVM on $\mathcal{A}^{\vR} \otimes \mathcal{C}^{\vR}$ depending on $Q^{[\vR]}$ and the measurement outcome $M$ is the decoding output.

%
%

Now, we prove Proposition \ref{prop:cqfeed}.
First, we have
\begin{align}
(1-\varepsilon) \log \vM 
	&\stackrel{\mathclap{(a)}}{\leq}
	I(W;M) + h_2(\varepsilon) \\
	&\stackrel{\mathclap{(b)}}{\le} I(W;\cA^{\vR} \mathcal{C}^{\vR} Q^{[\vR]}) + h_2(\varepsilon) ,
\end{align}
where $(a)$ is from Fano's inequality
	\begin{align}
	H(W|M) \le \varepsilon \log \vM + h_2(\varepsilon)
	\end{align}
	and the uniform distribution of $W$,
and $(b)$ is from the data-processing inequality for the decoding POVM.
Then, it is enough to derive the inequality
\begin{align}
 I(W;\cA^{\vR} \mathcal{C}^{\vR} Q^{[\vR]})
	&\le \sum_{i=1}^{\vR} H(\rho_W^{\cA^{i}}) 
	\label{eqffefsd}
\end{align}
for the proof of Proposition~\ref{prop:cqfeed}.

	To derive \eqref{eqffefsd}, we apply Lemma~\ref{qff0} and  Lemma~\ref{qff} as follows.
	Note that there is no constraint in the size of local registers.
	Thus, without losing generality, we assume that the sender's and the receiver's local registers are sufficiently large that the joint state on the entire protocol is always written as pure states.
	Since the operations at each round can be considered as a one-way LOCC map, 
	we can apply Lemma~\ref{qff0} for $(W,F,A,B,A',B',X) \coloneqq (W, Q^{[i-1]}, \cB^{i-1}, \cA^{i-1} \otimes \mathcal{C}^{i-1}, \cA^{i}\otimes \cB^{i}, \mathcal{C}^{i}, Q^{i})$:
	\begin{align*}
	&I(W; \mathcal{C}^{i} Q^{[i]}) + H( \mathcal{C}^{i} | W Q^{[i]})\\
	&\le I(W; \cA^{i-1} \mathcal{C}^{i-1} Q^{[i-1]}) + H( \cA^{i-1} \mathcal{C}^{i-1} | W Q^{[i-1]}).
	\end{align*}
	Furthermore, applying  Lemma~\ref{qff} with $(W,F, A,B) \coloneqq (W, Q^{[i]}, \cA^{i}, \mathcal{C}^{i} )$.
	we have
	\begin{align}
	&I(W; \cA^{i} \mathcal{C}^{i} Q^{[i]} ) + H(\cA^{i}  \mathcal{C}^{i} | W Q^{[i]} )
		\nonumber \\ 
	&\le H(\cA^{i}) + I(W; \mathcal{C}^{i} Q^{[i]}) + H(\mathcal{C}^{i} | W Q^{[i]} )   . \label{eq1:vdvs}
	\end{align}
Combining the above two inequalities, we have
	\begin{align}
	&I(W; \cA^{i} \mathcal{C}^{i} Q^{[i]} ) + H(\cA^{i}  \mathcal{C}^{i} | W Q^{[i]} )
	\nonumber
	\\
	&\le H(\cA^{i}) + I(W; \cA^{{i}-1} \mathcal{C}^{{i}-1} Q^{[i-1]}) 
	\nonumber
	\\
	&\quad + H(\cA^{{i}-1}\mathcal{C}^{{i}-1} | W Q^{[i-1]} )  \label{eqfsdfd}
	\end{align}
Applying the inequality \eqref{eqfsdfd} recursively, we obtain the desired inequality \eqref{eqffefsd} as 
	\begin{align*}
	&I(W; \cA^{\vR} \mathcal{C}^{\vR} Q^{[\vR]})\\
	&\le I(W; \cA^{\vR} \mathcal{C}^{\vR} Q^{[\vR]} ) + H(\cA^{\vR}  \mathcal{C}^{\vR} | W Q^{[\vR]} )\\ 
	&\stackrel{\mathclap{(c)}}{\le} \sum_{i=2}^{\vR} H(\cA^{i}) + I(W; \cA^{1} \mathcal{C}^{1} Q^{1}) + H(\cA^{1}\mathcal{C}^{1} | W Q^{1} )  
		\\
	&\stackrel{\mathclap{(d)}}{\le} \sum_{i=1}^{\vR} H(\cA^{i}) + I(W; \mathcal{C}^{1} Q^{1}) + H(\mathcal{C}^{1} | W Q^{1} )     
		\\
	&\stackrel{\mathclap{(e)}}{=} \sum_{i=1}^{\vR} H(\cA^{i})     ,
	\end{align*}
	where 
	$(c)$ is derived by applying \eqref{eqfsdfd} recursively for $i=\vR,{\vR-1},\ldots, 2$,
	$(d)$ is from \eqref{eq1:vdvs}, and 
	$(e)$ is obtained as follows:
	$I(W;  \mathcal{C}^{1} Q^{1}) = 0 $ because the receiver prepares the state of $\mathcal{C}^{1} \otimes Q^{1}$ independently of the sender's message $W$,
	and
	$H(\mathcal{C}^{1} | W Q^{1} ) = 0 $ 
		since 
		the initial state of the local register $\mathcal{C}^1$ is a pure state. 


\begin{IEEEbiographynophoto}{Seunghoan Song}(S'20)
received the B.E. degree from Osaka University in 2017 and the Master of Mathematical Science degree from Nagoya University in 2019. 
He is currently pursuing Ph.D. degree at the Graduate School of Mathematics, Nagoya University.
He is also a Research Fellow of the Japan Society of the Promotion of Science (JSPS) from 2020.
He awarded the School of Engineering Science Outstanding Student Award in 2017 and 
Graduate School of Mathematics Award for Outstanding Masters Thesis in 2019.
His research interests include classical and quantum information theory and its applications to secure communication protocols.
\end{IEEEbiographynophoto}

\begin{IEEEbiographynophoto}{Masahito Hayashi}(M'06--SM'13--F'17) was born in Japan in 1971.
He received the B.S.\ degree from the Faculty of Sciences in Kyoto
University, Japan, in 1994 and the M.S.\ and Ph.D.\ degrees in Mathematics from
Kyoto University, Japan, in 1996 and 1999, respectively. He worked in Kyoto University as a Research Fellow of the Japan Society of the
Promotion of Science (JSPS) from 1998 to 2000,
and worked in the Laboratory for Mathematical Neuroscience,
Brain Science Institute, RIKEN from 2000 to 2003,
and worked in ERATO Quantum Computation and Information Project,
Japan Science and Technology Agency (JST) as the Research Head from 2000 to 2006.
He also worked in the Superrobust Computation Project Information Science and Technology Strategic Core (21st Century COE by MEXT) Graduate School of Information Science and Technology, The University of Tokyo as Adjunct Associate Professor from 2004 to 2007.
He worked in the Graduate School of Information Sciences, Tohoku University as Associate Professor from 2007 to 2012.
In 2012, he joined the Graduate School of Mathematics, Nagoya University as Professor.
Also, he was appointed in Centre for Quantum Technologies, National University of Singapore as Visiting Research Associate Professor from 2009 to 2012
and as Visiting Research Professor from 2012 to now.
He worked in Center for Advanced Intelligence Project, RIKEN as
a Visiting Scientist from 2017 to 2020.
He worked in Shenzhen Institute for Quantum Science and Engineering, Southern University of Science and Technology, Shenzhen, China as a Visiting Professor from 2018 to 2020,
and
in Center for Quantum Computing, Peng Cheng Laboratory, Shenzhen, China
as a Visiting Professor from 2019 to 2020.
In 2020, he joined Shenzhen Institute for Quantum Science and Engineering, Southern University of Science and Technology, Shenzhen, China
as Chief Research Scientist. 
In 2011, he received Information Theory Society Paper Award (2011) for ``Information-Spectrum Approach to Second-Order Coding Rate in Channel Coding''.
In 2016, he received the Japan Academy Medal from the Japan Academy
and the JSPS Prize from Japan Society for the Promotion of Science.

In 2006, he published the book ``Quantum Information: An Introduction'' from Springer, whose revised version was published as ``Quantum Information Theory: Mathematical Foundation'' from Graduate Texts in Physics, Springer in 2016.
In 2016, he published other two books ``Group Representation for Quantum Theory'' and ``A Group Theoretic Approach to Quantum Information'' from Springer.
He is on the Editorial Board of {\it International Journal of Quantum Information}
and {\it International Journal On Advances in Security}.
His research interests include classical and quantum information theory and classical and quantum statistical inference.
\end{IEEEbiographynophoto}

\end{document}